\documentclass{amsart}



\usepackage{amsmath, amssymb, amsfonts, mathtools}
\usepackage{amsthm}
\usepackage{amsrefs}
\usepackage{enumitem}
\usepackage{accents}
\usepackage{mathrsfs}
\usepackage[a4paper,margin=1.32in]{geometry}
\mathtoolsset{showonlyrefs=true}  
\usepackage{hyperref}
\hypersetup{
    colorlinks = true,
    linkcolor  = DutchOrange,    
    citecolor  = DutchOrange,    
    urlcolor   = DutchOrange,    
}

\newtheorem{theorem}{Theorem}
\newtheorem{corollary}{Corollary}
\newtheorem{remark}{Remark}
\newtheorem{proposition}{Proposition}
\newtheorem{definition}{Definition}
\newtheorem{lemma}{Lemma} 

\usepackage{tikz,pgfplots,pgfplotstable} 
\usepackage{tikz-cd}
\usepgfplotslibrary{groupplots}
\usetikzlibrary{arrows.meta}
\usetikzlibrary{patterns.meta} 
\usepackage{graphicx}  
\usepackage{subcaption}
\pgfplotsset{compat=newest}

\interdisplaylinepenalty=2500

\usepackage[ruled,vlined]{algorithm2e}

\usepackage[svgnames]{xcolor} 
\DefineNamedColor{named}{LUblue}{cmyk}{1,0.85,0.05,0.22}
\DefineNamedColor{named}{LUbronze}{cmyk}{0.09,0.57,1,0.41}
\DefineNamedColor{named}{LuPastelle}{cmyk}{0,0.15,0.05,0.0}
\DefineNamedColor{named}{LuPastelle2}{cmyk}{0.29,0.02,0.24,0.03}
\DefineNamedColor{named}{LuPastelle3}{cmyk}{0.24,0.03,0.07,0.02}
\DefineNamedColor{named}{gentledark}{gray}{.15}
\colorlet{LUdarkblue}{LUblue!60!black}
\DefineNamedColor{named}{DutchOrange}{rgb}{1,0.498,0}

\usepackage{microtype}

\usepackage{cite}

\usepackage[acronym]{glossaries}
\newacronym{lti}{LTI}{linear-time-invariant}
\newacronym[plural=LMIs,
             firstplural=linear matrix inequalities]{lmi}{LMI}{linear matrix inequality}
\newacronym[plural=SRGs,
             firstplural=Scaled Relative Graphs]{srg}{SRG}{Scaled Relative Graph}
             

\def\BibTeX{{\rm B\kern-.05em{\sc i\kern-.025em b}\kern-.08em
    T\kern-.1667em\lower.7ex\hbox{E}\kern-.125emX}}

\usepackage{etoolbox}
\makeatletter
\patchcmd{\equation}{\@beginparpenalty}{\@lowpenalty}{}{}
\patchcmd{\endequation}{\@endparpenalty}{\@lowpenalty}{}{}
\makeatother

\makeatletter

\def\HyperAbsTol{0.5pt}
\def\HyperRelTol{1e-6}

\def\hyper@x#1,#2\relax{#1} 
\def\hyper@y#1,#2\relax{#2} 
\def\hyper@coords#1{#1}     

\pgfmathdeclarefunction{safeDiv}{2}{%
  \pgfmathparse{ifthenelse(abs(#2) < \hyper@tol, 0, #1/#2)}%
}

\def\hyper@computer#1#2{%
  \edef\hyper@toscan{(#1)}%
  \tikz@scan@one@point\hyper@coords\hyper@toscan
  \edef\hyper@sx{\the\pgf@x}%
  \edef\hyper@sy{\the\pgf@y}%
  \edef\hyper@toscan{(#2)}%
  \tikz@scan@one@point\hyper@coords\hyper@toscan
  \edef\hyper@ex{\the\pgf@x}%
  \edef\hyper@ey{\the\pgf@y}%

  \pgfmathsetmacro{\hyper@mx}{(\hyper@ex + \hyper@sx)/2}%
  \pgfmathsetmacro{\hyper@my}{(\hyper@ey + \hyper@sy)/2}%
  \pgfmathsetmacro{\hyper@dy}{\hyper@ey - \hyper@sy}%
  \pgfmathsetmacro{\hyper@dx}{\hyper@ex - \hyper@sx}%

  \pgfmathsetmacro{\hyper@L}{veclen(\hyper@dx,\hyper@dy)}%
  \pgfmathsetmacro{\hyper@tol}{max(\HyperAbsTol, \HyperRelTol*\hyper@L)}%

  \pgfmathtruncatemacro{\isvertical}{abs(\hyper@dx) < \hyper@tol}%
  \pgfmathtruncatemacro{\ishorizontal}{abs(\hyper@dy) < \hyper@tol && (abs(\hyper@dx) < 0.5)}%

  \ifnum\numexpr\isvertical+\ishorizontal>0\relax
    \edef\hyper@cmd{-- (\tikztotarget)}%
  \else
    \pgfmathsetmacro{\hyper@t}{safeDiv(\hyper@my, \hyper@dx)}%
    \pgfmathsetmacro{\hyper@cx}{\hyper@mx + \hyper@t * \hyper@dy}%
    \pgfmathsetmacro{\hyper@radius}{veclen(\hyper@cx - \hyper@sx, \hyper@sy)}%
    \pgfmathsetmacro{\hyper@sangle}{180 - atan2(\hyper@sy,\hyper@cx-\hyper@sx)}%
    \pgfmathsetmacro{\hyper@eangle}{180 - atan2(\hyper@ey,\hyper@cx-\hyper@ex)}%
    \edef\hyper@cmd{arc[radius=\hyper@radius pt, start angle=\hyper@sangle, end angle=\hyper@eangle]}%
  \fi
}

\tikzset{%
  hyperbolic disc radius/.initial={1cm},
  hyperbolic plane/.style={
    to path={
      \pgfextra{\hyper@computer\tikztostart\tikztotarget}%
      \hyper@cmd
    }
  },
  hyperbolic disc/.style={
    to path={
      \pgfextra{\hyper@disc@computer\tikztostart\tikztotarget}%
      \hyper@cmd
    }
  },
  hyperbolic plane target angle/.initial={},
}

\pgfmathdeclarefunction{bkx}{2}{%
\begingroup
 \pgfmathparse{1-(2/(1+#1*#1+#2*#2))}
 \pgfmath@smuggleone\pgfmathresult%
\endgroup}

\pgfmathdeclarefunction{bky}{2}{%
\begingroup
 \pgfmathparse{-(2*#1/(1+#1*#1+#2*#2))}
 \pgfmath@smuggleone\pgfmathresult%
\endgroup}

\makeatother

\newcommand{\x}{x}  
\newcommand{\y}{y}  
\renewcommand{\u}{u}  
\newcommand{\T}{\ensuremath{\mathbf{T}}}  
\newcommand{\A}{A}  
\newcommand{\B}{B}  
\newcommand{\C}{C}  
\newcommand{\D}{D}  
\newcommand{\m}{m}  
\renewcommand{\P}{P}  
\newcommand{\M}{M}  
\newcommand{\N}{N}  


\renewcommand{\Re}{\operatorname{Re}}
\renewcommand{\Im}{\operatorname{Im}}
\newcommand{\F}{\mathcal{F}_\infty}
\newcommand{\Hf}{\mathcal{H}_\infty}
\newcommand{\htwo}{\mathcal{H}_2}
\newcommand{\hilbert}{\mathcal{H}}

\newcommand{\Real}{\ensuremath{\mathbb{R}}}
\newcommand{\Comp}{\ensuremath{\mathbb{C}}}
\newcommand{\Compex}{\ensuremath{\hat{\mathbb{C}}}}
\newcommand{\cl}[1]{\ensuremath{\mathrm{cl}\,{#1}}}
\newcommand{\id}{\ensuremath{\mathbf{I}}}

\newcommand{\srg}[1]{\ensuremath{\mathrm{SRG}\left(#1\right)}} 

\newcommand{\norm}[1]{\left\lVert#1\right\rVert}
\newcommand{\dom}[1]{\ensuremath{\mathcal{D}_{#1}}} 
\newcommand{\annulus}[2]{\ensuremath{\mathrm{Ann}(#1,#2)}}

\newcommand{\graph}[1]{\ensuremath{\mathcal{G}_{#1}}}
\newcommand{\maxgain}[1]{\ensuremath{\bar{\sigma}\left(#1\right)}}
\newcommand{\mingain}[1]{\ensuremath{\ubar{\sigma}\left(#1\right)}}
\newcommand{\ltwotau}{\ensuremath{\mathcal{L}_{2,\tau}^{m}}}
\newcommand{\ltwoinf}{\ensuremath{\mathcal{L}_{2}^{m}}} 
 
\newcommand{\trunc}{\ensuremath{\mathbf{P}_\tau}} 
\newcommand{\fbk}[1]{\ensuremath{f_{\mathrm{BK}}\! \left(#1\right)}} 
\newcommand{\gbk}[1]{\ensuremath{g_{\mathrm{BK}} \! \left(#1\right)}} 
\newcommand{\Tinf}{\ensuremath{\mathbf{T}_{T(s)}}}
\newcommand{\Tinfhat}{\ensuremath{\hat{\mathbf{T}}_{T(s)}}}
\newcommand{\Tinfs}{\ensuremath{\mathbf{T}_{T(s)}^\sigma}}
\newcommand{\Tinfshat}{\ensuremath{\hat{\mathbf{T}}_{T(s)}^\sigma}}
\newcommand{\Tinfinverse}{\ensuremath{\mathbf{T}_{T(s)^{-1}}}}
\newcommand{\Ttau}{\ensuremath{\mathbf{T}_{T(s)}^\tau}}
\newcommand{\Ttauinverse}{\ensuremath{\mathbf{T}_{T(s)^{-1}}^\tau}}

\newcommand{\esssup}{\ensuremath{\mathrm{ess}\,\sup_{\omega\in\mathbb{R}}\maxgain{T(i\omega)}}}
\newcommand{\essinf}{\ensuremath{\mathrm{ess}\,\inf_{\omega\in\mathbb{R}}\mingain{T(i\omega)}}}

\newcommand{\ubar}[1]{\underaccent{\bar}{#1}}

\title{\phantom{ }\\Computable Characterisations of \\Scaled Relative Graphs of Closed Operators}

\author{Talitha Nauta and Richard Pates}
\thanks{The authors are with the Department of Automatic Control, Lund University, Box 118, SE 221 00 LUND, Sweden (e-mail: {\tt\{talitha.nauta, richard.pates\}@control.lth.se}).}
\thanks{The authors are members of the ELLIIT
Strategic Research Area at Lund University.}

\begin{document}

\begin{abstract}
    The \gls{srg} is a promising tool for stability and robustness analysis of multi-input multi-output systems. In this paper, we provide tools for exact and computable constructions of the \gls{srg} for closed linear operators, based on maximum and minimum gain computations. The results are suitable for bounded and unbounded operators, and we specify how they can be used to draw \glspl{srg} for the typical  operators that are used to model linear-time-invariant dynamical systems. Furthermore, for the special case of state-space models, we show how the Bounded Real Lemma can be used to construct the \gls{srg}.
    \glsresetall
\end{abstract}

\maketitle

\section{Introduction}
\glspl{srg} provide a geometrical tool for the analysis of operators. The \gls{srg} was first introduced by Ryu, Hannah, and Yin in~\cite{Ryu2021} as a tool for providing and discovering rigorous proofs of convergence for optimisation algorithms. The results were later connected to classical systems theory in~\cite{Chaffey2021} and~\cite{Chaffey2023}. In this setup, diagrams based on \glspl{srg} were used to determine stability and robustness properties of feedback interconnections of linear and non-linear multi-input multi-output systems \cite{Chaffey2021,Chaffey2023,baronprada2025,baronprada2025a,chen2025a,krebbekx2025}. Other domains in the control literature where the application of \glspl{srg} has been introduced are in dominance analysis~\cite{chen2025}, the Lur'e problem~\cite{krebbekx2025a}, and integral quadratic constraints~\cite{chen2025a,degroot2025,Eijnden2024}. 

To unlock the full potential of the \gls{srg} it is crucial to know how to determine it. This has led to a significant effort from the research community to provide methods for obtaining, or obtaining over-approximations of, the \gls{srg} for a range of commonly encountered operators. However, even for the case of linear operators, the results remain incomplete. The case of obtaining the \gls{srg} for normal matrices (and two-by-two matrices) was first solved in~\cite{Huang2024}. It was then shown in~\cite{Pates21} that the \glspl{srg} for bounded linear operators on Hilbert spaces could be obtained through a certain mapping of the Numerical Range. Results tailored towards the types of \gls{lti} operators encountered in systems theory were first presented in~\cite{Chaffey2021,Chaffey2023}, where Nyquist like diagrams were connected to the construction of the \glspl{srg} of operators related to stable single-input single-output \gls{lti} systems. These results were then extended to normal operators and connected to tools from dissipativity theory in~\cite{degroot2025}, where further connections to non-linear operators were also given.

A difficulty in obtaining \gls{srg} constructions in the systems theory setting stems from the fact that it is typical to model systems using operators on extended spaces. This is often crucial when deducing stability properties of interconnections from their constituent parts, but is an obstruction to standard \gls{srg} definitions which are firmly grounded in the theory of operators on Hilbert spaces. Recently, this has led to the consideration of so called soft and hard \glspl{srg}~\cite{chen2025a}. These results increase compatibility with conventional integral quadratic constraints or dissipativity tools, but typically come at the price of sharpness, in the sense that \gls{srg} constructions are replaced with constructions of over-approximations of \glspl{srg}~\cite{degroot2025, krebbekx2025}.

In this paper, we aim to narrow this gap by giving exact and computable constructions of \glspl{srg} for the types of \gls{lti} operators typically encountered in the study of dynamical systems. Our main contribution is presented in section~\ref{sec:mainresult}, where it is shown that the \gls{srg} of any linear, possibly unbounded, operator with closed graph can be determined from maximum and minimum gain computations. We then specialise these results to operators obtained from a transfer function definition in section~\ref{sec:operatorsonL2e}. We distinguish between two types of operators in subsections~\ref{sec:ltwoinf}~and~\ref{sec:ltwotau}, each of which is closely related to the notions of soft and hard \glspl{srg} in~\cite{chen2025a}. We further give frequency domain formulas for the corresponding gain calculations (presented as Theorems~\ref{thm:gain_T_inf} and \ref{thm:gain_T_tau} respectively), allowing the \glspl{srg} for both representations to be computed to any desired level of precision. Lastly, we connect the gain computations to the Bounded Real Lemma in section~\ref{sec:ss}, allowing convex optimisation to be used to determine the corresponding \glspl{srg} whenever the transfer functions can be written on state-space form. Examples are given in section~\ref{sec:examples}.

\section{Preliminaries}
\label{sec:preliminaries}
\subsection{Basic notation}
The complex conjugate of $z \in \mathbb{C}$ is denoted $\bar{z}$ and we denote the conjugate transpose $(\cdot)^*$. The closure of a set $S$ is denoted $\cl{S}$. The \textit{extended complex plane} is the set $\Compex\coloneqq{}\mathbb{C} \cup \{\infty\}$, which we endow with the usual chordal metric. We define the limit of a sequence of subsets $S_k\subseteq{}\Compex$ according to $\lim_{k\rightarrow{}\infty}S_k\coloneqq{}\{z\in\Compex:z=\lim_{k\rightarrow{}\infty}z_k,z_k\in S_k\}$.

A Hilbert space over the field $\mathbb{C}$ is denoted $\hilbert$. It is equipped with an inner product $\langle\cdot,\cdot\rangle: \hilbert~\times~\hilbert~\rightarrow~\mathbb{C}$ which defines a norm $\norm{\cdot} = \sqrt{\langle \cdot,\cdot \rangle}$. The Hilbert space of Lebesgue square integrable functions over a field $\mathbb{C}$ is the set of functions $$\ltwoinf \coloneqq \{ f: [0, \infty) \rightarrow \mathbb{C}^m : \norm{f} < \infty \},$$ where the inner product is $\langle u, y \rangle = \int_0^\infty \u(t)^* \y(t) \, dt$. The Hilbert space $\ltwotau$ is the linear subspace of $\ltwoinf$ where $f(t) = 0$ when $t \in [\tau, \infty)$. The truncation operator $\trunc: \ltwoinf \rightarrow \ltwotau$ is defined such that $\trunc f(t) = f(t)$ on $[0, \tau]$ and zero for $t > \tau$. The Hardy space $\mathcal{H}_2^{m}$ denotes the set of functions $\hat{f}: \mathbb{C} \rightarrow \mathbb{C}^m$ such that
\begin{equation}
    \sup_{\sigma > 0} \left( \frac{1}{2\pi}\int_{-\infty}^{\infty} \hat{f}(\sigma + i\omega)^*\hat{f}(\sigma + i\omega) d\omega \right)^{\frac{1}{2}}< \infty, 
\end{equation}
and the Hardy space $\mathcal{H}_\infty^{m \times m}$ denotes the set of functions $\hat{f}: \mathbb{C}^m \rightarrow \mathbb{C}^m$ such that
\begin{equation}
    \sup_{\Re(s) > 0} \maxgain{\hat{f}(s)}< \infty,
\end{equation}
where $\maxgain{\cdot}$ denotes the largest singular value. By $\mathcal{F}_\infty^{m \times m}$ we denote the quotient field over $\mathcal{H}_\infty^{m \times m}$.

\subsection{Notation for closed linear operators}
We call $\T:\dom{\T} \subseteq \hilbert \rightarrow \hilbert$ a linear (possibly unbounded) operator if it is linear and $\dom{\T}$ is a linear manifold. The operator is \textit{closed} if its graph is a closed set in $\hilbert \times \hilbert$, 
where the graph of the operator is defined by
\begin{equation}
    \mathcal{G}_{\T} = \begin{pmatrix}
        I \\ \T
    \end{pmatrix} \dom{\T} \subset \hilbert \times \hilbert.
\end{equation}
We define a partial order over operators so that $\T_1\preceq\T_2$ if $\graph{\T_1}\subseteq{}\graph{\T_2}$. The identity operator is denoted $\id$.

\subsection{Scaled Relative Graphs}
We define the \gls{srg} of a possibly unbounded linear operator $\T:\dom{\T} \subseteq \hilbert \rightarrow \hilbert$ to be the subset of the extended complex plane given by
\begin{align}
    \srg{\T} \!
    \coloneqq{} \!\!
    \left\{ \frac{\| \y \|}{\| \u\|} \text{exp} \left(\! \pm i \angle(\u,\y) \right) \! : \!
    u \in \dom{\T} \backslash \{ 0 \} , \y = \T \u \right\}\!.
\end{align}
This extends the usual definition \cite{Ryu2021} in a natural way to cover operators with domain not necessarily equal to $\hilbert$. Note that the \textit{relative} part of the definition is not required, since the operators we consider are linear. Note also that while we only consider Hilbert spaces over $\Comp$, our results cover the real Hilbert space case through \cite[Theorem 2]{Pates21} (this result is easily extended to the closed operator setting considered here).

The \gls{srg} captures some geometric features of the input-output pairs of the operator $\T$. The first term $\|y\|/\|u\|$ represents the scaling of the output relative to the input and the exponent captures the angle between the input and output. The angle $\angle(u,y)$ between $u \in \mathcal{H}$ and $y \in \mathcal{H}$ is defined by its inner product
\begin{align}
    \cos{(\angle(u,y))} = \frac{\text{Re}(\langle y,u \rangle)}{\| y \| \| u \|} \text{ where } \angle(u,y)=0 \text{ if } y = 0.
\end{align}
Note that the \gls{srg} is mirrored in the real axis.

\subsection{Beltrami-Klein mapping}
The Beltrami-Klein mapping maps the extended complex plane to the closed unit disc through
$$\fbk{z} \coloneqq{} \frac{(\Bar{z} - i)(z-i)}{1 + \bar{z}z}.$$ The mapping is surjective, and bijective on the extended upper half-plane $\{z:\mathrm{Im}(z)\in\mathbb{C}\geq{}0\}\cup\{\infty\}$. Defining
\[
\gbk{z} \coloneqq{} \left\{ \frac{\Im(z) \pm i\sqrt{1 - |z|^2}}{\Re(z) - 1} \right\},
\]
it then follows that if $S$ is a subset of the unit disc, $S = \fbk{\gbk{S}}$. Furthermore, for any subset $S$ of the extended complex plane that is symmetric about the real axis $\left( z\in{}S\implies{}\bar{z}\in{}S\right)$, $S = \gbk{\fbk{S}}$.

The Beltrami-Klein mapping bijectively maps generalised circles centred on the extended real axis to chords of the unit circle. This translates convexity properties of subsets of the unit disc to be convexity properties of subsets of the extended complex plane with respect to the Poincar\'{e} metric. We will have particular need for the following lemma, which shows that the subsets of $\Compex$ given by the intersection of a set of bounding annuli have particular convexity properties. Each annulus is defined by the longest and shortest distance from a point $\alpha \in \mathbb{R}$ to a set $S \subseteq \Compex$, which are given by
\begin{align}
    d_l(S, \alpha) & = \sup \{|z-\alpha| : z \in S\} \\
    d_s(S, \alpha) & = \inf \{|z-\alpha| : z \in S\}. 
\end{align}
Using this, we define an annulus as the subset of the extended complex plane to be
\begin{equation}
    \label{eq:set_annuli}
    \annulus{S}{\alpha}
    \coloneqq{}
    \left\{ \alpha + z :d_s(S, \alpha) \leq |z| \leq d_l(S, \alpha) \right\},
\end{equation}
where it is understood that $\{\infty\} \in \annulus{S}{\alpha}$ if $d_l(S, \alpha) = \infty$ and $\annulus{S}{\alpha} = \{\infty\}$ if $d_s(S, \alpha) = d_l(S, \alpha) = \infty$. 
With this definition in place, we state the following result. 
\begin{lemma}
    \label{lem:annuli}
     A closed set $S\subseteq\{z\in\Comp:\vert{}z\vert{}\leq{}1\}$ is convex if and only if
    \begin{equation}\label{eq:seteq}
    \bigcap_{\alpha\in\Real}\annulus{\gbk{S}}{\alpha}=\gbk{S}.
    \end{equation}
\end{lemma}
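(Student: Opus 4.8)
The plan is to transport the whole statement to the closed unit disc by means of the Beltrami--Klein mapping, where convexity is ordinary Euclidean convexity and the separating hyperplane theorem is available. I will use repeatedly that $\gbk{S}$ is symmetric about the real axis, that $\fbk{\gbk{S}}=S$, that $\fbk{\cdot}$ restricts to a bijection between symmetric subsets of $\Compex$ and subsets of the closed disc with inverse $\gbk{\cdot}$, and that a generalised circle centred on the real axis is carried to a chord of the unit circle. The inclusion $\gbk{S}\subseteq\bigcap_{\alpha}\annulus{\gbk{S}}{\alpha}$ is immediate from the definitions of $d_s$ and $d_l$, so only the reverse inclusion and its equivalence to convexity require work. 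The central geometric observation is that, for fixed $\alpha$, the two bounding circles $|z-\alpha|=d_s(\gbk{S},\alpha)$ and $|z-\alpha|=d_l(\gbk{S},\alpha)$ meet the real axis in nested pairs of points, so their images are two \emph{non-crossing} chords; hence $\fbk{\annulus{\gbk{S}}{\alpha}}$ is the lens between them, i.e.\ an intersection of two half-discs, and is therefore convex.

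For the ``if'' direction, assume \eqref{eq:seteq}. Since each annulus is symmetric about the real axis and $\fbk{\cdot}$ restricts to a set-bijection on symmetric sets with inverse $\gbk{\cdot}$, it commutes with intersections of symmetric sets; applying it to \eqref{eq:seteq} yields $S=\bigcap_{\alpha\in\Real}\fbk{\annulus{\gbk{S}}{\alpha}}$. By the observation above each member of this intersection is convex, and an arbitrary intersection of convex sets is convex, so $S$ is convex.

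For the ``only if'' direction, suppose $S$ is convex and fix $w\notin\gbk{S}$; I must produce $\alpha\in\Real$ with $w\notin\annulus{\gbk{S}}{\alpha}$. From \eqref{eq:set_annuli}, this means $|w-\alpha|<d_s(\gbk{S},\alpha)$ or $|w-\alpha|>d_l(\gbk{S},\alpha)$, which is exactly the statement that some circle centred at a finite $\alpha$ has $\gbk{S}$ strictly on one side and $w$ strictly on the other: if $\gbk{S}$ lies inside then $|w-\alpha|>d_l(\gbk{S},\alpha)$, and if it lies outside then $|w-\alpha|<d_s(\gbk{S},\alpha)$. Transporting to the disc, $S$ is compact (being closed in the disc) and convex with $\fbk{w}\notin S$, so the family of lines strictly separating $\fbk{w}$ from $S$ is non-empty and open; such a line meets the open disc and is thus a chord, and it arises from a circle centred at a \emph{finite} $\alpha$ exactly when it avoids the single boundary point $\fbk{\infty}$. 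Since an open family of lines cannot be contained in the lines through one point, some separating chord avoids $\fbk{\infty}$, and it is the image of a finite circle, furnishing the required $\alpha$.

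The main obstacle is precisely this exceptional boundary point $\fbk{\infty}$: because the annuli are built from circles centred at \emph{finite} real $\alpha$, they never realise chords through $\fbk{\infty}$, so one cannot simply invoke that a closed convex set equals the intersection of all its supporting half-discs. The two ingredients that dispose of it are the openness of the separating family (used above to perturb any separating chord off the forbidden point while preserving strict separation) and, for the single remaining value $w=\infty$, closedness: if $\infty\notin\gbk{S}$ then $\gbk{S}$ is closed and hence bounded in $\Compex$, so $d_l(\gbk{S},\alpha)<\infty$ and $\infty\notin\annulus{\gbk{S}}{\alpha}$ for every $\alpha$. Combining the two inclusions with the two directions gives \eqref{eq:seteq} if and only if $S$ is convex.
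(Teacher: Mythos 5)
Your proof is correct and follows essentially the same route as the paper's: transport everything to the closed disc via the Beltrami--Klein map, obtain convexity of $S$ from the fact that the mapped annuli are intersections of half-planes with the closed disc, and obtain the converse from the separating-hyperplane theorem. The only substantive difference is that you explicitly handle the exceptional boundary point $\fbk{\infty}$ --- the paper's proof tacitly assumes every strictly separating chord is the image of a circle centred at a \emph{finite} $\alpha\in\Real$, which fails for chords with endpoint $\fbk{\infty}$ (those correspond to vertical lines); your openness/perturbation argument, together with the boundedness argument for the case $w=\infty$, closes exactly this small gap, so your write-up is if anything slightly more complete than the paper's own.
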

\begin{proof}
    The Beltrami-Klein mapping bijectively maps circles centred on the extended real axis to chords of the unit circle. Hence
    \begin{equation}\label{eq:cvx}
    \fbk{\bigcap_{\alpha\in\Real}\annulus{\gbk{S}}{\alpha}}
    \end{equation}
    is equal to the intersection of a set of closed half-spaces with the closed unit disc. All these sets are convex, and so the set in \eqref{eq:cvx} is convex. Since $\fbk{\gbk{S}}=S$, this establishes that if \eqref{eq:seteq} holds, $S$ is convex. Now suppose that \eqref{eq:seteq} does not hold, with the goal of showing that $S$ is not convex. We proceed by contradiction, and so suppose that $S$ is convex. From the definition of the annuli, it is clear that for any $\alpha$, $\annulus{\gbk{S}}{\alpha}\supseteq{}\gbk{S}$. This implies that if \eqref{eq:seteq} does not hold, there exists a $z\notin{}S$ such that for all $\alpha$,
    \begin{equation}\label{eq:cont}
    z\in\fbk{\annulus{\gbk{S}}{\alpha}}.
    \end{equation}
    Since $S$ is assumed convex, there exists a hyperplane that strictly separates $z$ and $S$. However again, since the Beltrami-Klein mapping bijectively maps circles to lines, this implies that there exists an $\alpha\in\Real$ such that $z\notin{}\fbk{\annulus{\gbk{S}}{\alpha}}$, see Figure~\ref{hyperplanes} for an illustration. This contradicts \eqref{eq:cont}, and the proof is complete.
\end{proof}

\begin{figure}[h]
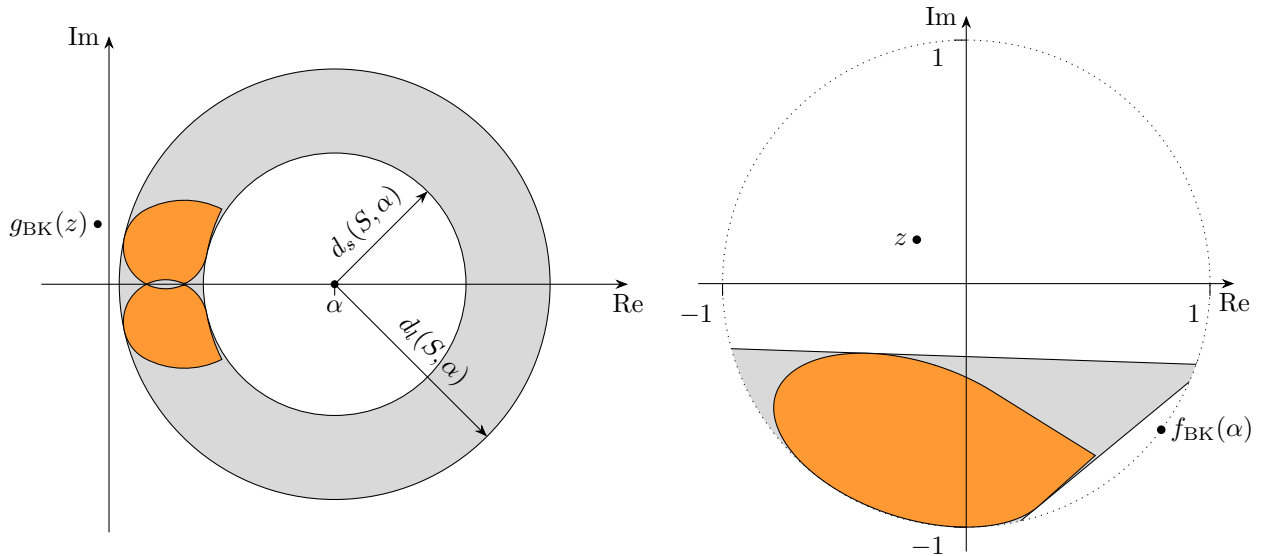

    \centering
    \begin{subfigure}{0.9\textwidth}     
        \centering
            \begin{tikzpicture}[every to/.style={hyperbolic plane},scale=1,>={Stealth[scale=1]}]
        
       \pgfmathsetmacro{\myalpha}{3}  
       \pgfmathsetmacro{\maxr}{2.8639}
       \pgfmathsetmacro{\minr}{1.7459}

        \let\radius\undefined
        \newlength{\radius}
        \setlength{\radius}{1.5pt}

        \draw[fill=gray!30,even odd rule] (\myalpha,0) circle (\maxr) (\myalpha,0) circle (\minr);
        \input{data/SRG/simple_example.tex}
  
        \coordinate (alpha) at (\myalpha,0);

        \fill (alpha) circle[radius=\radius];
        \node[below] at (\myalpha,-.1) {$\alpha$};

        \fill (-0.15,0.8) circle[radius=\radius];
        \node[left] at (-0.15,0.8) {$\gbk{z}$};

        \draw[->] (-0.9,0) -- (6.2,0) node[below]{$\Re$};
        \draw[->] (0,-3.3) -- (0,3.3) node[left]{$\Im$};
        \draw (\myalpha,.05) -- (\myalpha,-.15);
        \draw[->] (alpha) -- node[above,rotate=45] {$d_s(S, \alpha)$} ({\myalpha+\minr*cos(45)},{\minr*sin(45)});
        \draw[->] (alpha) -- node[above,rotate=-45] {\phantom{ } $d_l(S, \alpha)$} ({\myalpha+\maxr*cos(45)},{-\maxr*sin(45)});

    \end{tikzpicture}
        \caption{The areas in the extended complex plane, where the orange area is the set $\gbk{S}$ and the grey area is the set $\annulus{\gbk{S}}{\alpha}$.}
        \label{fig:annuli}
    \end{subfigure}
    
    \vspace{1em} 
    
    \begin{subfigure}{0.9\textwidth}
        \centering
            \begin{tikzpicture}[>={Stealth[scale=1]},scale=3.24]

       \pgfmathsetmacro{\myalpha}{3}  
       \pgfmathsetmacro{\maxr}{2.8639}
       \pgfmathsetmacro{\minr}{1.7459}

        \fill[gray!30] (0,0) circle (1);

        \begin{scope}
        \clip (0,0) circle (1);
        \draw[fill=white] ({bkx(\myalpha-\minr,0)},{bky(\myalpha-\minr,0)}) -- ({bkx(\myalpha+\minr,0)},{bky(\myalpha+\minr,0)}) -- (2,0) -- (-2,-10);
        \draw[fill=white] ({bkx(\myalpha-\maxr,0)},{bky(\myalpha-\maxr,0)}) -- ({bkx(\myalpha+\maxr,0)},{bky(\myalpha+\maxr,0)})-- (2,2) -- (-2,2);
        \end{scope}
	     
         \input{data/BK_numerical_range/simple_example.tex}

        \fill ({bkx(\myalpha,0)}, {bky(\myalpha,0)}) circle[radius=0.5pt];
        \node[right] at ({bkx(\myalpha,0)}, {bky(\myalpha,0)}) {$\fbk{\alpha}$};

        \fill ({bkx(-0.15,0.8)}, {bky(-0.15,0.8)}) circle[radius=0.5pt];
        \node[left] at ({bkx(-0.15,0.8)}, {bky(-0.15,0.8)}) {$z$};

	     \draw[dotted] (0,0) circle [radius=1];

        \draw[->] (-1.1,0) -- (1.1,0) node[below]{$\Re$};;
        \draw[->] (0,-1.1) -- (0,1.1) node[left]{$\Im$};;

        \node[below left] at (1,-0.05) {$1$};
        \draw (1,0) -- (1,-0.05);
        \node[below left] at (-1,-0.05) {$-1$};
        \draw (-1,0) -- (-1,-0.05);
        
        \node[below left] at (-0.05,1) {$1$};
        \draw (0,1) -- (-0.05,1);
        \node[below left] at (-0.05,-1) {$-1$};
        \draw (0,-1) -- (-0.05,-1);

\end{tikzpicture}
        \caption{The areas in the unit disc under Beltrami-Klein mapping, where the orange area is the set $S$ and the grey area is the set $\fbk{\annulus{\gbk{S}}{\alpha}}$.}
        \label{fig:hyperplane}
    \end{subfigure}
    \caption{The set $\gbk{S}$ and the bounding annulus with centre point $\alpha$ on the extended complex plane~(a) and under Beltrami-Klein mapping~(b).} 
    \label{hyperplanes}
    \vspace{-8pt}
\end{figure}

\section{Gain based characterisation of the SRG}
\label{sec:mainresult}
In the following section, we show that the limit of the \glspl{srg} of a monotone sequence of closed linear operators $\T_k:\dom{\T_k} \subseteq \hilbert \rightarrow \hilbert$ can be obtained from the intersection of a set of annuli defined in terms of the maximum and minimum gains of operators. Our result covers the case of a single closed linear operator as a special case. This characterisation is contingent on the following result concerning the convexity properties of the \gls{srg}.

\begin{theorem}
    \label{lemma:convex}
    For a closed linear operator $\T$, $\fbk{{\srg{\T}}}$ is a convex set in the unit disc $\{z\in\mathbb{C}:\vert{}z\vert{}\leq1\}$.
\end{theorem}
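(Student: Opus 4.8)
The plan is to reduce the statement to the classical Toeplitz--Hausdorff theorem. The starting point is the observation that for a unit-norm input $u\in\dom{\T}$, the corresponding point $w\in\srg{\T}$ satisfies $\Re(w)=\Re\langle \T u,u\rangle$ and $|w|^2=\norm{\T u}^2$, and more generally, for any $\alpha\in\Real$,
\begin{equation}
    |w-\alpha|^2=\frac{\norm{(\T-\alpha\id)u}^2}{\norm{u}^2}.
\end{equation}
Taking suprema and infima over $u$ identifies the annulus radii as operator gains, $d_l(\srg{\T},\alpha)=\maxgain{\T-\alpha\id}$ and $d_s(\srg{\T},\alpha)=\mingain{\T-\alpha\id}$. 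This both connects the construction to the gain computations central to the paper and makes Lemma~\ref{lem:annuli} available as one route for transferring convexity between the disc and $\Compex$.

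The key simplification, however, is that a direct computation gives
\begin{equation}
    \fbk{w}=\frac{(q-1)-2ip}{q+1},\qquad p\coloneqq\Re(w),\ q\coloneqq|w|^2,
\end{equation}
so that $\fbk{w}$ depends on $w$ only through the pair $(p,q)$. Writing this in homogeneous coordinates shows that the induced map $(p,q)\mapsto\fbk{w}$ is a projective transformation whose exceptional line is $q=-1$; since every point of $\srg{\T}$ has $q=|w|^2\geq 0$, this map is a convexity-preserving bijection on the relevant region. Consequently it suffices to prove that the planar set
\begin{equation}
    C\coloneqq\left\{\left(\Re\langle \T u,u\rangle,\ \norm{\T u}^2\right):u\in\dom{\T},\ \norm{u}=1\right\}
\end{equation}
is convex, since $\fbk{\srg{\T}}$ is precisely its image under this map.

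The convexity of $C$ is where the real work lies, and I would obtain it from Toeplitz--Hausdorff. Identifying $\Real^2$ with $\Comp$, the set $C$ is the numerical range of the (generally unbounded) operator $\tfrac12(\T+\T^*)+i\,\T^*\T$, i.e.\ the joint numerical range of the two Hermitian forms $u\mapsto\Re\langle \T u,u\rangle$ and $u\mapsto\norm{\T u}^2$. To accommodate unboundedness I would run the standard two-dimensional reduction directly on the forms rather than on the operator: given $u_1,u_2\in\dom{\T}$ producing two points of $C$, the span $\mathrm{span}\{u_1,u_2\}\subseteq\dom{\T}$ (a linear manifold) carries the two $2\times 2$ Hermitian forms with entries $\langle \T u_j,u_k\rangle$ and $\langle \T u_j,\T u_k\rangle$; after normalising by the Gram matrix of $\{u_1,u_2\}$ one is left with the joint numerical range of two $2\times 2$ Hermitian matrices, whose convexity is the elementary base case of Toeplitz--Hausdorff. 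This fills in the segment between the two chosen points of $C$, establishing convexity of $C$ and hence of $\fbk{\srg{\T}}$.

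The main obstacle I anticipate is bookkeeping around unboundedness and the point at infinity rather than any single hard inequality. One must check that the forms are genuinely defined on $\mathrm{span}\{u_1,u_2\}$ (which the linear-manifold and closed-graph hypotheses secure), that the degenerate case of linearly dependent $u_1,u_2$ is handled separately, and that the boundary behaviour as $q\to\infty$ (where $\fbk{w}\to 1$ on the unit circle, the image of $\infty\in\Compex$) is compatible with the claimed convexity. The projective-map step must likewise be justified carefully, so that convexity is transported across the Beltrami--Klein mapping without the exceptional line $q=-1$ interfering.
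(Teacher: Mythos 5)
Your proposal is correct, and it reaches the result by a genuinely different route than the paper. The paper's proof is a short reduction to prior work: closedness makes $\graph{\T}$ a closed subspace of $\hilbert\times\hilbert$, hence a Hilbert space, and \cite{Pates21} identifies $\fbk{\srg{\T}}$ with the numerical range of the contraction $R(u,y)=(-iy-u,\,y-iu)$ evaluated over that graph space, so convexity and containment in the unit disc follow in one stroke from the Toeplitz--Hausdorff theorem together with $\norm{R}\leq 1$. You instead unpack this identification by hand: you show that $\fbk{w}$ depends on an SRG point $w$ only through $(p,q)=\left(\Re\langle \T u,u\rangle,\norm{\T u}^{2}\right)$ via a projective map whose exceptional line $q=-1$ avoids the half-plane $q\geq 0$, and you prove convexity of the joint numerical range $C$ of the two Hermitian forms $u\mapsto\Re\langle \T u,u\rangle$ and $u\mapsto\norm{\T u}^{2}$ by the classical two-dimensional Toeplitz--Hausdorff reduction on $\mathrm{span}\{u_1,u_2\}\subseteq\dom{\T}$; your $(p,q)$ formula is, up to a conjugation convention, exactly what the quotient $\langle Rz,z\rangle/\langle z,z\rangle$ evaluates to on the graph, so both proofs run on the same engine. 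What your version buys is self-containedness and extra generality: closedness is never actually used --- only that $\dom{\T}$ is a linear manifold --- since completeness of the graph is irrelevant to a two-dimensional reduction, so your argument proves the theorem for arbitrary linear operators, whereas the paper needs closedness only so that \cite{Pates21} can be quoted on a genuine Hilbert space. What the paper's route buys is brevity and reuse of machinery employed elsewhere. Two pieces of bookkeeping you should make explicit in a final write-up: the matrix with entries $\langle \T u_j,u_k\rangle$ need not be Hermitian, so the form representing $\Re\langle \T u,u\rangle$ is its Hermitian part $\tfrac12\bigl(\langle \T u_j,u_k\rangle+\langle u_j,\T u_k\rangle\bigr)$; and containment of the image in the unit disc follows from Cauchy--Schwarz ($p^{2}\leq q$), or simply from the fact, stated in the preliminaries, that $\fbk{\cdot}$ maps $\Compex$ into the closed unit disc. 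Your worry about the point at infinity is vacuous for this theorem: every point of $\srg{\T}$ is finite because $\T u\in\hilbert$; infinite points only enter for the operator sequences treated in Theorem~\ref{circle_thm}.
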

\begin{proof} 
 Since $\T$ is closed, the graph of the operator $\graph{\T}$ is a closed subspace of $\hilbert\times \hilbert$ and therefore itself a Hilbert space because every closed subspace of a Hilbert space is a Hilbert space with the inner product inherited from $\hilbert$ (see e.g. \cite[Theorem 3.2-3]{kreyszig1991}). From \cite{Pates21} we know that the set $\fbk{\srg{\T}}$ is given by
 \begin{align}
        & \left\{ \frac{\langle R(\u,\y), (\u,\y) \rangle_{\hilbert\times \hilbert}}{\langle (\u,\y), (\u,\y) \rangle_{\hilbert\times \hilbert}} \;\middle|\; (\u,\y) \in \graph{\T} = \Tilde{\hilbert} \right\}
        = \left\{ \frac{\langle Rz, z \rangle_{\Tilde{\hilbert}}}{\langle z, z \rangle_{\Tilde{\hilbert}}} \;\middle|\; z \in \Tilde{\hilbert} \right\},
    \end{align}
    where $R(u,y) = (-i\y - \u, \y - i\u)$ and the last set is the numerical range over $\Tilde{\hilbert}$. As the numerical range is a convex set and $\norm{R} \leq 1$, $\fbk{\srg{\T}}$ is a convex set inside the unit disc. 
\end{proof}

Lemma~\ref{lem:annuli} can be combined with Theorem~\ref{lemma:convex} to show that the closure of the \gls{srg} of a closed operator can be computed from the intersection of a set of annuli. These computations can be further simplified in terms of the maximum and minimum gains of an operator, which we define according to
\begin{align}
      \maxgain{\T}
   \coloneqq{}
   \sup_{\substack{\u \in \dom{\T} \\ \u \neq 0}} \frac{\norm{\T\u}}{\norm{\u}}
 \;\;\text{  and  }\;\;
      \mingain{\T}
   \coloneqq{}
   \inf_{\substack{\u \in \dom{\T} \\ \u \neq 0}}  \frac{\norm{\T \u}}{\norm{\u}}.
\end{align}
In a slight abuse of notation, we now define the annulus of an operator according to
\begin{equation}
    \label{eq:operator_annulus}
    \annulus{\T}{\alpha}
    \coloneqq{}
    \left\{ \alpha + z :\mingain{\T - \alpha \id} \leq |z| \leq \maxgain{\T - \alpha \id} \right\}.
\end{equation}
From the definition of the \gls{srg} and the fact that $\srg{\T - \alpha \id}=\srg{\T}-\alpha$, (see Figure~\ref{fig:annuli} and consider the case $S = \fbk{\srg{\T}}$), it follows that 
\[
\annulus{\srg{\T}}{\alpha}=\annulus{\T}{\alpha}.
\]
That is, this notation is consistent with the earlier definition in \eqref{eq:set_annuli} if the set in question is $\srg{\T}$. 

We now state our main result, which shows that the limit of the \glspl{srg} of a monotone sequence of closed operators can be computed from the intersection of a set of annuli. Since the annuli can be computed from the notions of maximum and minimum gain, this gives a way to compute the \gls{srg} using standard algorithms. We shall give several examples of this in the coming section. Note that we state the result in terms of a well ordered sequence of operators for a technical reason connected to the study of operators on extended spaces. The presented result can always be used to characterise the \gls{srg} of a single closed linear operator $\T$, as made specific in Corollary~\ref{cor:ann} below.  
\begin{theorem}
    \label{circle_thm}
    Let $\T_{0}\preceq{}\T_1\preceq{}\T_2\ldots{}$ be a sequence of closed linear operators $\T_k:\dom{\T_k}\subseteq{}\hilbert\rightarrow{}\hilbert$. Then
    \begin{align}
        \lim_{k\rightarrow{}\infty}\srg{\T_k} = \bigcap_{\alpha \in \Real} \lim_{k\rightarrow\infty}\annulus{\T_k}{\alpha}.
    \end{align}
\end{theorem}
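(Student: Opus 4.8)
The plan is to collapse both sides of the identity onto a single closed convex object via the Beltrami--Klein mapping, and then invoke Lemma~\ref{lem:annuli}. The monotone-sequence formulation is what makes this clean: rather than constructing a (possibly ill-defined) limit operator, I would work directly with the set $S \coloneqq \lim_{k\to\infty}\srg{\T_k}$.

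First I would record the consequences of $\T_0\preceq\T_1\preceq\cdots$. Since $\graph{\T_k}\subseteq\graph{\T_{k+1}}$, every input--output pair of $\T_k$ is one of $\T_{k+1}$, so the sets $\srg{\T_k}$ increase with $k$. Enlarging the graph can only raise a supremum and lower an infimum, so $\maxgain{\T_k-\alpha\id}$ is nondecreasing and $\mingain{\T_k-\alpha\id}$ is nonincreasing in $k$ for each fixed $\alpha$; hence the annuli $\annulus{\T_k}{\alpha}$ are nested increasingly (their inner radii shrink and outer radii grow). For a nested increasing sequence of subsets of $\Compex$, the set-limit defined in the preliminaries coincides with the closure of the union, the nontrivial inclusion being a diagonal argument in the chordal metric. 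This lets me write $\lim_{k\to\infty}\srg{\T_k}=\cl{\bigcup_k\srg{\T_k}}=S$ and likewise express $\lim_{k\to\infty}\annulus{\T_k}{\alpha}$ as a closed annulus.

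Next I would identify that limiting annulus with $\annulus{S}{\alpha}$. The computation $|z-\alpha|=\norm{(\T-\alpha\id)u}/\norm{u}$ for $z\in\srg{\T}$ gives $d_l(\srg{\T_k},\alpha)=\maxgain{\T_k-\alpha\id}$ and $d_s(\srg{\T_k},\alpha)=\mingain{\T_k-\alpha\id}$. Taking the supremum and infimum over $k$, and using that passing from $\bigcup_k\srg{\T_k}$ to its closure $S$ leaves the extremal distances $d_l(\cdot,\alpha)$ and $d_s(\cdot,\alpha)$ unchanged (the distance function is continuous), the inner and outer radii of $\lim_{k\to\infty}\annulus{\T_k}{\alpha}$ are exactly $d_s(S,\alpha)$ and $d_l(S,\alpha)$. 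Thus $\lim_{k\to\infty}\annulus{\T_k}{\alpha}=\annulus{S}{\alpha}$, and the right-hand side of the theorem becomes $\bigcap_{\alpha\in\Real}\annulus{S}{\alpha}$. It then remains to prove $\bigcap_{\alpha\in\Real}\annulus{S}{\alpha}=S$. By Lemma~\ref{lem:annuli} applied to the unit-disc set $\fbk{S}$ (using that $S$ is symmetric about the real axis, so $\gbk{\fbk{S}}=S$), this is equivalent to $\fbk{S}$ being closed and convex. Each $\fbk{\srg{\T_k}}$ is convex by Theorem~\ref{lemma:convex}; since the $\srg{\T_k}$ increase, so do the $\fbk{\srg{\T_k}}$, and an increasing union of convex sets is convex. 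Because $f_{\mathrm{BK}}$ restricts to a homeomorphism of the compact closed upper half-plane onto the closed unit disc, it commutes with closure on symmetric sets, so $\fbk{S}=\cl{\bigcup_k\fbk{\srg{\T_k}}}$ is a closed convex subset of the unit disc, and Lemma~\ref{lem:annuli} finishes the argument.

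I expect the main obstacle to be the interchange of the set-limit with the intersection over $\alpha$: concretely, confirming cleanly that $\lim_{k\to\infty}\annulus{\T_k}{\alpha}=\annulus{S}{\alpha}$, including the behaviour at $\infty$ when $d_l(S,\alpha)=\infty$, and that the set-limit of an increasing sequence equals its closed union. The convexity ingredient is supplied directly by Theorem~\ref{lemma:convex}; the genuine care lies in the topology of these limits in the chordal metric and in verifying that taking closures neither disturbs the extremal-distance radii defining the annuli nor the convexity transported by $f_{\mathrm{BK}}$.
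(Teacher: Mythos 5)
Your proposal is correct and follows essentially the same route as the paper's proof: both rest on Theorem~\ref{lemma:convex} for convexity, on Lemma~\ref{lem:annuli} after transporting everything to the unit disc, and on the identification of the extremal distances $d_l$, $d_s$ with $\maxgain{\T_k-\alpha\id}$, $\mingain{\T_k-\alpha\id}$. The only differences are bookkeeping: you realise the monotone set-limit as the closure of the union in $\Compex$ and obtain convexity of $\fbk{S}$ as the closure of an increasing union of convex sets, whereas the paper takes the limit inside the disc (using continuity of $\gbk{\cdot}$) and proves convexity of the limit $S_\infty$ by an interior-point argument.
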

\begin{proof}
    From Theorem~\ref{lemma:convex}, $S_k\coloneqq{}\fbk{\srg{\T_k}}$ is convex and lies inside the unit disc. Since the Beltrami-Klein mapping is bijective on the upper extended half-plane of the complex plane, and $\gbk{\cdot}$ is a continuous mapping between the unit disc and the upper extended half-plane (with respect to the chordal metric), it follows that
    \[  
    \begin{aligned} \lim_{k\rightarrow{}\infty}\srg{\T_k}&= \lim_{k\rightarrow{}\infty}\gbk{S_k}= \gbk{\lim_{k\rightarrow\infty}S_k}.
    \end{aligned}
    \]
     Furthermore since $\T_k\preceq{}\T_{k+1}$ for all $k$, it follows that $S_0\subseteq{}S_1\subseteq{}S_2\ldots{}$, which then implies that
    \[
    S_\infty\coloneqq{}\lim_{k\rightarrow{}\infty}S_k
    \]
    exists and is a bounded closed subset of the unit disc.  This establishes that
    \[
    \lim_{k\rightarrow{}\infty}\srg{\T_k}=\gbk{S_\infty}.
    \]
    Now consider any two points in the interior of $S_\infty$. Then for sufficiently large $k$, these points lie in the set $S_k$. Since $S_k$ is convex, the line between the two points also lies in $S_k$. This shows that the interior of $S_\infty$ is convex, and it follows that $S_\infty$ is a convex set. Then Lemma~\ref{lem:annuli} gives that 
        \begin{align}
        \lim_{k\rightarrow{}\infty}\srg{\T_k} = \bigcap_{\alpha \in \Real} \annulus{\gbk{S_\infty}}{\alpha}.
    \end{align}
    For each $\alpha\in\Real$, $\annulus{\gbk{S_\infty}}{\alpha}$ is defined by the longest and shortest distance to the set $\gbk{S_\infty}$. Since $\gbk{S_k}=\srg{\T_k}$, the longest and shortest distances to from $\alpha$ to $\srg{\T_k}$ are given by $\maxgain{\T_k-\alpha{}\id}$ and $\mingain{\T_k-\alpha{}\id}$ respectively. It then follows that the longest and shortest distances from $\alpha$ to $\gbk{S_\infty}$ are given by
    \begin{equation}
        \label{eq:lim_gain}
        \lim_{k\rightarrow{}\infty}\maxgain{\T_k-\alpha{}\id}
        \;\text{and}\;\lim_{k\rightarrow{}\infty}\mingain{\T_k-\alpha{}\id},
    \end{equation} 
    respectively. Hence
    \[
    \annulus{\gbk{S_\infty}}{\alpha}=\lim_{k\rightarrow\infty}\annulus{\T_k}{\alpha},
    \]
    and the proof is complete.
\end{proof}
To obtain the \gls{srg} for a single closed linear operator $\T$ we can set $\T_k=\T$ for all $k$ in Theorem~\ref{circle_thm}. This yields the following special case of Theorem~\ref{circle_thm}.
\begin{corollary}
    \label{cor:ann}
    Let $\T:\dom{\T}\subseteq{}\hilbert\rightarrow{}\hilbert$ be a closed linear operator. Then
    \begin{align}
        \cl{\srg{\T}} = \bigcap_{\alpha \in \Real} \annulus{\T}{\alpha}.
    \end{align}
\end{corollary}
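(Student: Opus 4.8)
The plan is to derive the corollary as a direct specialisation of Theorem~\ref{circle_thm}, applied to the constant sequence $\T_k=\T$ for every $k$. First I would verify that this constant sequence meets the hypotheses of the theorem: each $\T_k=\T$ is a closed linear operator by assumption, and since $\graph{\T}\subseteq\graph{\T}$ we trivially have $\T\preceq\T$, so the well-ordered chain $\T_0\preceq\T_1\preceq\ldots$ is satisfied. Theorem~\ref{circle_thm} then yields
\[
\lim_{k\rightarrow\infty}\srg{\T}=\bigcap_{\alpha\in\Real}\lim_{k\rightarrow\infty}\annulus{\T}{\alpha}.
\]

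The second step is to evaluate the two set limits in this degenerate case. Unwinding the definition $\lim_{k\rightarrow\infty}S_k=\{z\in\Compex:z=\lim_{k\rightarrow\infty}z_k,\,z_k\in S_k\}$ with $S_k=S$ held fixed, a point $z$ belongs to the limit exactly when it is a chordal limit of some sequence contained in $S$; this is precisely the description of the closure $\cl{S}$ as the set of sequential limit points, which is valid because $\Compex$ is a metric space under the chordal metric. Hence $\lim_{k\rightarrow\infty}\srg{\T}=\cl{\srg{\T}}$. For the right-hand side, I would observe that $\annulus{\T}{\alpha}$ is defined by the non-strict inequalities $\mingain{\T-\alpha\id}\le|z|\le\maxgain{\T-\alpha\id}$, together with the stated conventions that keep the point at infinity consistently included, so each annulus is already a closed subset of $\Compex$ and its constant-sequence limit is itself.

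Substituting these two identifications into the theorem gives $\cl{\srg{\T}}=\bigcap_{\alpha\in\Real}\annulus{\T}{\alpha}$, as claimed. I do not expect any substantive obstacle, since all the analytic content is carried by Theorem~\ref{circle_thm}; the only points deserving a moment's care are the identification of a constant-sequence set limit with the closure, which relies on the metric structure so that closure and sequential closure coincide, and the verification that the annulus is genuinely closed so that its limit collapses to itself.
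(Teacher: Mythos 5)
Your proof is correct and takes essentially the same approach as the paper: the paper obtains the corollary precisely by setting $\T_k=\T$ for all $k$ in Theorem~\ref{circle_thm}. Your two supporting observations---that a constant-sequence set limit in the chordal metric equals the closure, and that each $\annulus{\T}{\alpha}$ is closed so its constant limit is itself---merely make explicit the details the paper leaves implicit.
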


\section{Computation of SRGs for dynamical systems}
\label{sec:operatorsonL2e}

Theorem~\ref{circle_thm} shows that if we can compute 
\begin{equation}\label{eq:neededgain}
\lim_{k\rightarrow{}\infty}\maxgain{\T_k-\alpha\id{}}\;\text{and}\;\lim_{k\rightarrow{}\infty}\mingain{\T_k-\alpha\id{}},
\end{equation}
we can compute $\lim_{k\rightarrow{}\infty}\srg{\T_k}$ to any desired level of precision by gridding over $\alpha$. The basic procedure is illustrated in Figure~\ref{fig:bounding_ann}, and summarised in Algorithm~\ref{alg:outer}. In short, the algorithm proceeds over a grid of points $\{\alpha_1, \alpha_2, \dots, \alpha_n \}$, and inserts the computations in \eqref{eq:neededgain} into \texttt{MaxGain($\alpha$)} and \texttt{MinGain($\alpha$)} respectively. The algorithm returns these numbers for each $\alpha_i$, which is all that is required to specify $\lim_{k\rightarrow{}\infty}\annulus{\T_k}{\alpha_i}$.

This algorithm can be used in the construction of the \gls{srg} for different types of operator sequences, by specifying the appropriate method to compute the maximum and minimum gain. The simplest case is the setting in Corollary~\ref{cor:ann} (all elements of the sequence are equal), where in addition the operators are matrices in $\Comp^{\m\times{}m}$. Then the required maximum and minimum gain computations correspond to calculating the maximum and minimum singular values of a matrix.
\begin{figure}[t]
    \centering
    \begin{tikzpicture}[every to/.style={hyperbolic plane},scale=0.7,>={Stealth[scale=1]}]  
       \pgfmathsetmacro{\myalphaone}{3}  
       \pgfmathsetmacro{\maxrone}{2.8639}
       \pgfmathsetmacro{\minrone}{1.7459}
       \pgfmathsetmacro{\myalphatwo}{2}  
       \pgfmathsetmacro{\maxrtwo}{1.9053}
       \pgfmathsetmacro{\minrtwo}{0.7873}
       \pgfmathsetmacro{\myalphathree}{-1}  
       \pgfmathsetmacro{\maxrthree}{2.6926}
       \pgfmathsetmacro{\minrthree}{1.2610}

        \let\radius\undefined
        \newlength{\radius}
        \setlength{\radius}{1.5pt}

        \begin{scope}
          \clip (\myalphaone,0) circle (\maxrone);
          \clip (\myalphatwo,0) circle (\maxrtwo);
          \clip (\myalphathree,0) circle (\maxrthree);

          \fill[black!20] (-5,-5) rectangle (5,5);

          \fill[white] (\myalphaone,0) circle (\minrone);
          \fill[white] (\myalphatwo,0) circle (\minrtwo);
          \fill[white] (\myalphathree,0) circle (\minrthree);

        \end{scope}

         \draw[thick, green, even odd rule] (\myalphaone,0) circle (\maxrone) (\myalphaone,0) circle (\minrone);
        \draw[thick, red, even odd rule] (\myalphatwo,0) circle (\maxrtwo) (\myalphatwo,0) circle (\minrtwo);
        \draw[thick, blue, even odd rule] (\myalphathree,0) circle (\maxrthree) (\myalphathree,0) circle (\minrthree);

        \draw[fill=DutchOrange!80] (0.2969250941, 0.8274494306) to (0.279602687, 0.8020370307) to (0.2642430381, 0.7766589489) to (0.2507727086, 0.7515394831) to (0.239090181, 0.7268725563) to (0.2290734384, 0.7028189276) to (0.2205873719, 0.6795056246) to (0.2134904932, 0.657027224) to (0.2076406155, 0.6354485071) to (0.2028993476, 0.6148079969) to (0.1991353901, 0.5951219298) to (0.1962267363, 0.5763883003) to (0.1940619382, 0.5585907113) to (0.1925406248, 0.541701859) to (0.19157346, 0.5256865622) to (0.1910817058, 0.5105043057) to (0.1909965312, 0.496111311) to (0.1912581771, 0.4824621753) to (0.1918150585, 0.4695111312) to (0.1926228642, 0.4572129863) to (0.1936436901, 0.4455237992) to (0.1948452321, 0.4344013439) to (0.1962000504, 0.4238054084) to (0.1976849098, 0.4136979659) to (0.1992801964, 0.40404325) to (0.2009694066, 0.3948077593) to (0.2027387033, 0.3859602126) to (0.2045765325, 0.377471468) to (0.2064732946, 0.3693144211) to (0.2084210624, 0.3614638879) to (0.2104133416, 0.3538964815) to (0.2124448667, 0.3465904869) to (0.2145114282, 0.3395257359) to (0.216609726, 0.3326834866) to (0.2187372469, 0.3260463073) to (0.2208921599, 0.3195979668) to (0.2230732297, 0.313323331) to (0.2252797437, 0.3072082661) to (0.2275114518, 0.3012395488) to (0.2297685159, 0.2954047826) to (0.2320514699, 0.2896923203) to (0.2343611857, 0.2840911924) to (0.2366988478, 0.278591041) to (0.2390659325, 0.2731820584) to (0.2414641934, 0.2678549304) to (0.2438956509, 0.2626007839) to (0.2463625866, 0.2574111384) to (0.2488675408, 0.2522778606) to (0.2514133148, 0.2471931224) to (0.2540029754, 0.2421493618) to (0.2566398638, 0.237139246) to (0.2593276069, 0.2321556372) to (0.2620701329, 0.2271915602) to (0.2648716891, 0.2222401721) to (0.2677368641, 0.2172947335) to (0.2706706128, 0.2123485815) to (0.2736782861, 0.2073951045) to (0.2767656639, 0.2024277178) to (0.279938993, 0.1974398412) to (0.2832050295, 0.1924248783) to (0.2865710865, 0.1873761971) to (0.2900450877, 0.1822871128) to (0.2936356269, 0.1771508728) to (0.2973520343, 0.1719606439) to (0.3012044509, 0.1667095035) to (0.3052039102, 0.1613904336) to (0.3093624294, 0.1559963207) to (0.3136931103, 0.1505199605) to (0.3182102499, 0.144954071) to (0.3229294629, 0.1392913137) to (0.3278678153, 0.1335243273) to (0.3330439704, 0.1276457744) to (0.3384783465, 0.121648406) to (0.3441932873, 0.1155251479) to (0.3502132422, 0.1092692126) to (0.3565649564, 0.1028742435) to (0.363277667, 0.09633449844) to (0.3703833003, 0.08964507889) to (0.3779166637, 0.08280221614) to (0.3859156232, 0.07580362265) to (0.3944212517, 0.06864892052) to (0.4034779329, 0.061340158) to (0.4131333951, 0.05388242526) to (0.4234386486, 0.04628457844) to (0.4344477878, 0.03856007823) to (0.4462176168, 0.03072794282) to (0.4588070461, 0.02281380615) to (0.4722762071, 0.01485105912) to (0.4866852249, 0.006882033153) to (0.5020925988, 0.00104083708) to (0.5185531476, 0.00885396189) to (0.5361155075, 0.01648177498) to (0.5548192033, 0.02383628797) to (0.574691372, 0.030817186) to (0.5957432838, 0.03731263292) to (0.6179668765, 0.04320093363) to (0.6413315939, 0.04835315037) to (0.6657818623, 0.05263668471) to (0.6912355566, 0.05591972459) to (0.7175837666, 0.05807632794) to (0.7446920805, 0.05899179421) to (0.772403454, 0.05856788777) to (0.8005425595, 0.05672744649) to (0.8289213308, 0.05341794653) to (0.8573452754, 0.04861370002) to (0.885620044, 0.04231651612) to (0.9135577377, 0.03455483015) to (0.9409824962, 0.02538146727) to (0.9677350281, 0.01487033078) to (0.9936758878, 0.003112374821) to (1.018687448, 0.009788764577) to (1.042674643, 0.02372113891) to (1.065564645, 0.0385685808) to (1.087305695, 0.05421432528) to (1.107865311, 0.0705440432) to (1.12722811, 0.08744824237) to (1.145393433, 0.1048240481) to (1.162372934, 0.1225764143) to (1.178188254, 0.1406188392) to (1.19286886, 0.1588736709) to (1.206450102, 0.177272089) to (1.218971508, 0.1957538414) to (1.23047533, 0.2142668071) to (1.241005323, 0.232766445) to (1.250605744, 0.2512151761) to (1.259320551, 0.2695817361) to (1.267192767, 0.2878405277) to (1.274264002, 0.3059709908) to (1.280574089, 0.3239570055) to (1.286160831, 0.341786336) to (1.291059822, 0.3594501195) to (1.295304344, 0.3769424024) to (1.298925308, 0.3942597227) to (1.5, 1) to (0.4709261227, 0.9843735862) to (0.4412096079, 0.965991946) to (0.4128210411, 0.9458815422) to (0.385972594, 0.9242452683) to (0.3608429105, 0.9013187756) to (0.3375703269, 0.8773615835) to (0.3162485413, 0.8526466674) to cycle; \begin{scope}[yscale=-1] \draw[fill=DutchOrange!80] (0.2969250941, 0.8274494306) to (0.279602687, 0.8020370307) to (0.2642430381, 0.7766589489) to (0.2507727086, 0.7515394831) to (0.239090181, 0.7268725563) to (0.2290734384, 0.7028189276) to (0.2205873719, 0.6795056246) to (0.2134904932, 0.657027224) to (0.2076406155, 0.6354485071) to (0.2028993476, 0.6148079969) to (0.1991353901, 0.5951219298) to (0.1962267363, 0.5763883003) to (0.1940619382, 0.5585907113) to (0.1925406248, 0.541701859) to (0.19157346, 0.5256865622) to (0.1910817058, 0.5105043057) to (0.1909965312, 0.496111311) to (0.1912581771, 0.4824621753) to (0.1918150585, 0.4695111312) to (0.1926228642, 0.4572129863) to (0.1936436901, 0.4455237992) to (0.1948452321, 0.4344013439) to (0.1962000504, 0.4238054084) to (0.1976849098, 0.4136979659) to (0.1992801964, 0.40404325) to (0.2009694066, 0.3948077593) to (0.2027387033, 0.3859602126) to (0.2045765325, 0.377471468) to (0.2064732946, 0.3693144211) to (0.2084210624, 0.3614638879) to (0.2104133416, 0.3538964815) to (0.2124448667, 0.3465904869) to (0.2145114282, 0.3395257359) to (0.216609726, 0.3326834866) to (0.2187372469, 0.3260463073) to (0.2208921599, 0.3195979668) to (0.2230732297, 0.313323331) to (0.2252797437, 0.3072082661) to (0.2275114518, 0.3012395488) to (0.2297685159, 0.2954047826) to (0.2320514699, 0.2896923203) to (0.2343611857, 0.2840911924) to (0.2366988478, 0.278591041) to (0.2390659325, 0.2731820584) to (0.2414641934, 0.2678549304) to (0.2438956509, 0.2626007839) to (0.2463625866, 0.2574111384) to (0.2488675408, 0.2522778606) to (0.2514133148, 0.2471931224) to (0.2540029754, 0.2421493618) to (0.2566398638, 0.237139246) to (0.2593276069, 0.2321556372) to (0.2620701329, 0.2271915602) to (0.2648716891, 0.2222401721) to (0.2677368641, 0.2172947335) to (0.2706706128, 0.2123485815) to (0.2736782861, 0.2073951045) to (0.2767656639, 0.2024277178) to (0.279938993, 0.1974398412) to (0.2832050295, 0.1924248783) to (0.2865710865, 0.1873761971) to (0.2900450877, 0.1822871128) to (0.2936356269, 0.1771508728) to (0.2973520343, 0.1719606439) to (0.3012044509, 0.1667095035) to (0.3052039102, 0.1613904336) to (0.3093624294, 0.1559963207) to (0.3136931103, 0.1505199605) to (0.3182102499, 0.144954071) to (0.3229294629, 0.1392913137) to (0.3278678153, 0.1335243273) to (0.3330439704, 0.1276457744) to (0.3384783465, 0.121648406) to (0.3441932873, 0.1155251479) to (0.3502132422, 0.1092692126) to (0.3565649564, 0.1028742435) to (0.363277667, 0.09633449844) to (0.3703833003, 0.08964507889) to (0.3779166637, 0.08280221614) to (0.3859156232, 0.07580362265) to (0.3944212517, 0.06864892052) to (0.4034779329, 0.061340158) to (0.4131333951, 0.05388242526) to (0.4234386486, 0.04628457844) to (0.4344477878, 0.03856007823) to (0.4462176168, 0.03072794282) to (0.4588070461, 0.02281380615) to (0.4722762071, 0.01485105912) to (0.4866852249, 0.006882033153) to (0.5020925988, 0.00104083708) to (0.5185531476, 0.00885396189) to (0.5361155075, 0.01648177498) to (0.5548192033, 0.02383628797) to (0.574691372, 0.030817186) to (0.5957432838, 0.03731263292) to (0.6179668765, 0.04320093363) to (0.6413315939, 0.04835315037) to (0.6657818623, 0.05263668471) to (0.6912355566, 0.05591972459) to (0.7175837666, 0.05807632794) to (0.7446920805, 0.05899179421) to (0.772403454, 0.05856788777) to (0.8005425595, 0.05672744649) to (0.8289213308, 0.05341794653) to (0.8573452754, 0.04861370002) to (0.885620044, 0.04231651612) to (0.9135577377, 0.03455483015) to (0.9409824962, 0.02538146727) to (0.9677350281, 0.01487033078) to (0.9936758878, 0.003112374821) to (1.018687448, 0.009788764577) to (1.042674643, 0.02372113891) to (1.065564645, 0.0385685808) to (1.087305695, 0.05421432528) to (1.107865311, 0.0705440432) to (1.12722811, 0.08744824237) to (1.145393433, 0.1048240481) to (1.162372934, 0.1225764143) to (1.178188254, 0.1406188392) to (1.19286886, 0.1588736709) to (1.206450102, 0.177272089) to (1.218971508, 0.1957538414) to (1.23047533, 0.2142668071) to (1.241005323, 0.232766445) to (1.250605744, 0.2512151761) to (1.259320551, 0.2695817361) to (1.267192767, 0.2878405277) to (1.274264002, 0.3059709908) to (1.280574089, 0.3239570055) to (1.286160831, 0.341786336) to (1.291059822, 0.3594501195) to (1.295304344, 0.3769424024) to (1.298925308, 0.3942597227) to (1.5, 1) to (0.4709261227, 0.9843735862) to (0.4412096079, 0.965991946) to (0.4128210411, 0.9458815422) to (0.385972594, 0.9242452683) to (0.3608429105, 0.9013187756) to (0.3375703269, 0.8773615835) to (0.3162485413, 0.8526466674) to cycle; \end{scope}

        \fill (\myalphaone,0) circle[radius=\radius];
        \fill (\myalphatwo,0) circle[radius=\radius];
        \fill (\myalphathree,0) circle[radius=\radius];
        \node[below, green] at (\myalphaone,-.1) {$\alpha_1$};
        \node[below, red] at (\myalphatwo,-.1) {$\alpha_2$};
        \node[below, blue] at (\myalphathree,-.1) {$\alpha_3$};
        
        \draw[->] (-4,0) -- (7,0) node[below]{$\Re$};
        \draw[->] (0,-3.5) -- (0,3.5) node[left]{$\Im$};
        \draw (\myalphaone,.05) -- (\myalphaone,-.15);
        \draw (\myalphatwo,.05) -- (\myalphatwo,-.15);
        \draw (\myalphathree,.05) -- (\myalphathree,-.15);
    \end{tikzpicture}
    \caption{Illustration of Algorithm~\ref{alg:outer} for computation of the \gls{srg}. The orange area shows the \gls{srg} while the grey area shows the approximation that is given by the intersection of the annuli defined by $\{ \alpha_1, \alpha_2, \alpha_3\}$.}
    \label{fig:bounding_ann}
\end{figure}
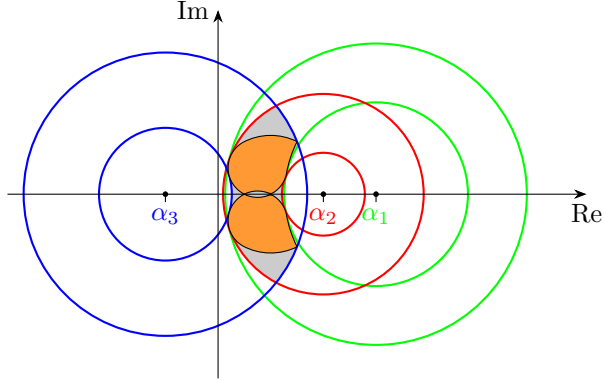
\begin{algorithm}[b]
\caption{Computation of the SRG}\label{alg:outer}
\DontPrintSemicolon
\SetKwFunction{SRGfromAnn}{SRGfromAnn}
\SetKwFunction{Grid}{Grid}
\SetKwFunction{MaxGain}{MaxGain}
\SetKwFunction{MinGain}{MinGain}

\SetKwProg{Fn}{Function}{}{}
\Fn{\SRGfromAnn{$n$}}{
    $\alpha \gets$ \Grid{$n$}\;
    $\ubar{\sigma} \gets$ vector of length $n$\;
    $\bar{\sigma} \gets$ vector of length $n$\;
    \For{$i \gets 1$ \KwTo $n$}{
        $\ubar{\sigma}(i) \gets$ \MinGain{$\alpha(i)$}\;
        $\bar{\sigma}(i) \gets$ \MaxGain{$\alpha(i)$}\;
    }
    \KwRet{$\alpha, \ubar{\sigma}, \bar{\sigma}$}\;
}
\end{algorithm}

In the remainder of this section, we explain how to compute the maximum and minimum gains for two types of operators that describe the dynamics of \gls{lti} systems. 
The operators we consider can all be described by transfer functions $T(s)\in\F^{\m\times{}\m}$. This class is very broad, allowing for any $T(s)$ such that
\[
T=NM^{-1},\;\text{where}\;\begin{pmatrix}
    M\\N
\end{pmatrix}\in\Hf^{2\m\times{}\m}.
\]
This class therefore includes all transfer functions associated with state-space models, along with a host of other less frequently encountered transfer functions such as
\[
s,\;e^{-s},\;\frac{1}{1-e^{-s}}\;\text{and}\;e^s.
\]
To associate transfer functions with closed operators, we follow \cite{Georgiou1992}. In section~\ref{sec:ltwoinf} we consider the case of operators defined on $\ltwoinf$, and present the relevant formula for the maximum and minimum gains in Theorem~\ref{thm:gain_T_inf}. We then further specialise to the case of operators acting on truncations of $\ltwoinf$ in section~\ref{sec:ltwotau}, where Theorem~\ref{thm:gain_T_tau} gives the appropriate formulas. Finally, in section~\ref{sec:ss} we explain how convex optimisation techniques may be used to compute the maximum and minimum gains of the associated operators whenever $T(s)$ is proper and real-rational.

\subsection{\gls{lti} operators defined on $\ltwoinf$}\label{sec:ltwoinf}

A transfer function $T=NM^{-1}\in\F^{\m\times{}\m}$ naturally defines an operator
\begin{equation}
    \hat{\T}: \dom{\hat{\T}} \subseteq \mathcal{H}_2^m \rightarrow \mathcal{H}_2^m
\end{equation}
with graph
\[
\graph{\hat{\T}}=\begin{pmatrix}
    M\\N
\end{pmatrix}\htwo^{\m}.
\]
Since the Fourier transform gives a Hilbert space isomorphism between $\ltwoinf$ and $\htwo^{\m}$, we can associate $\hat{\T}$ with an operator
\begin{equation}
    \T: \dom{\T} \subseteq \ltwoinf \rightarrow \ltwoinf
\end{equation}
defined by the relation $\hat{\T} \hat{u} = T\hat{u}=(\widehat{\T u})$. That is, the Fourier transform of every output of $\T$ is equal to the Fourier transform of the input multiplied by the transfer function. The operator $\T$ thus captures what is normally meant by an \gls{lti} transfer function model in the time domain, and the fact that the operators may be unbounded ensures that the analysis is not limited to stable systems.

A complication here is that without further restrictions, the above constructions may not yield a unique operator. For example, given any $T=NM^{-1}$ we may equally well define two operators $\hat{\T}_1$ and $\hat{\T}_2$ via
\[
\graph{\hat{\T}_1}=\begin{pmatrix}
M\\
N
\end{pmatrix}\htwo^{\m}\;\text{and}\;\graph{\hat{\T}_2}=\begin{pmatrix}
Me^{-s}\\
Ne^{-s}
\end{pmatrix}\htwo^{\m}.
\]
Both are associated with the same transfer function, however it follows from \cite[Proposition 6]{Georgiou1992} that $\graph{\hat{\T}_1}\supset{}\graph{\hat{\T}_2}$ (the inclusion is strict). Hence, the operators are partially ordered ($\T_1\succeq{}\T_2$ if $\graph{\T_1}\supseteq{}\graph{\T_2}$), which suggests defining the notion of a maximal operator associated with a transfer function.

\begin{definition}\label{def:ltwoinf}
Let $T\in\F^{\m\times{}\m}$ and define $\mathcal{T}$ as the set \[
\left\{\T:\graph{\hat{\T}}=
\begin{pmatrix}
    M\\N
\end{pmatrix}\htwo^{\m},\begin{pmatrix}
    M\\N
\end{pmatrix}\in\Hf^{2m\times{}m},T = NM^{-1}\right\}.
\]
Whenever it exists, define $\Tinf$ to be the maximum element of $\mathcal{T}$
(that is $\Tinf$ is the operator $\Tinf\in\mathcal{T}$ such that $\Tinf\succeq{}\T$ for all $\T\in\mathcal{T}$).
\end{definition}
Maximum elements, whenever they exist, are unique. The question of existence was answered in \cite[Proposition 9]{Georgiou1992}. We repeat their result here.

\begin{proposition}
    \label{prop:operator}
    Given any $T(s) \in \F^{\m \times \m}$, $\Tinf$ as given in Definition~\ref{def:ltwoinf} exists and is a closed operator.
\end{proposition}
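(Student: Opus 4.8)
The plan is to exhibit the maximal element $\Tinf$ explicitly as multiplication by $T$ on the largest admissible domain, and then to produce a factorization realising exactly that graph, so that $\Tinf$ is shown to lie in $\mathcal{T}$. Working throughout with the Hardy-space operator $\hat{\T}$ (the $\ltwoinf$ and $\htwo^m$ pictures being identified by the Fourier isomorphism, under which closedness and the partial order are preserved), I would take as candidate the operator $\hat{\T}_{\max}$ given by multiplication by $T$ with graph $\{(x, Tx) : x \in \htwo^m,\ Tx \in \htwo^m\}$.

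First I would dispatch the two easy properties. For closedness, suppose $x_n \to x$ and $Tx_n \to y$ in $\htwo^m$. Since point evaluation is continuous on $\htwo^m$, both sequences converge uniformly on compact subsets of the open right half-plane, while $Tx_n \to Tx$ there by analyticity of the entries of $T$; hence $y = Tx \in \htwo^m$, so $x$ is admissible and the graph is closed. For dominance, given any admissible factorization $\binom{M}{N} \in \Hf^{2m\times m}$ with $T = NM^{-1}$, each element of $\binom{M}{N}\htwo^m$ has the form $(Mx, Nx) = (Mx,\, T(Mx))$ with both coordinates in $\htwo^m$; hence $\binom{M}{N}\htwo^m \subseteq \graph{\hat{\T}_{\max}}$. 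Thus $\hat{\T}_{\max}$ dominates every element of $\mathcal{T}$, and it remains only to show it belongs to $\mathcal{T}$.

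The crux is to represent $\graph{\hat{\T}_{\max}}$ as a factorization graph. Its domain $\{x \in \htwo^m : Tx \in \htwo^m\}$ is closed (as just shown) and invariant under the shift semigroup, since admissibility of $x$ gives $T(e^{-\tau s}x) = e^{-\tau s}(Tx) \in \htwo^m$ for $\tau \geq 0$. By the Beurling--Lax--Halmos theorem it therefore equals $\Theta_1 \htwo^m$ for some inner $\Theta_1$, and the multiplicity is exactly $m$: the domain contains $M\htwo^m$ (for $x = Mw$ one has $Tx = Nw \in \htwo^m$), which already has multiplicity $m$, while the domain sits inside $\htwo^m$, forcing $\Theta_1 \in \Hf^{m\times m}$ square with $\det \Theta_1$ a nonzero inner function, hence invertible in $\F$. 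Setting $\Theta_2 \coloneqq T\Theta_1$, the inclusion $\Theta_1\htwo^m \subseteq \{x : Tx \in \htwo^m\}$ shows $\Theta_2$ maps $\htwo^m$ into $\htwo^m$ and so lies in $\Hf^{m\times m}$. Then $\binom{\Theta_1}{\Theta_2} \in \Hf^{2m\times m}$, $T = \Theta_2\Theta_1^{-1}$, and $\binom{\Theta_1}{\Theta_2}\htwo^m = \{(\Theta_1 x, T\Theta_1 x)\} = \graph{\hat{\T}_{\max}}$. This places $\hat{\T}_{\max}$ in $\mathcal{T}$, identifying it as the (necessarily unique) maximum $\Tinf$, which is closed by the first step.

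The step I expect to be the main obstacle is this representation: verifying that the Beurling--Lax--Halmos inner factor has the correct multiplicity $m$ and, more delicately, that $\Theta_2 = T\Theta_1$ genuinely lands in $\Hf^{m\times m}$ rather than merely defining an $\htwo^m$-valued map. Both rely on the invertibility of $T$ in $\F$ and on the observation that common inner factors of $M$ and $N$ --- which shrink the graph, exactly as in the pair $\binom{M}{N}$ and $\binom{Me^{-s}}{Ne^{-s}}$ from the discussion preceding the Proposition --- have been automatically stripped out by passing to the closed, shift-invariant maximal domain. One must also invoke Beurling--Lax--Halmos in the right half-plane form adapted to the continuous shift semigroup rather than the disc form, the two being equivalent under the standard conformal identification of $\htwo$ spaces.
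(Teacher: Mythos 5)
The paper does not actually prove this proposition --- it is imported verbatim from \cite[Proposition 9]{Georgiou1992} --- so your argument must be judged against Georgiou's. Your outline is the right one and matches his: the candidate maximum is multiplication by $T$ on the largest admissible domain, your proof that its \emph{graph} is closed (via continuity of point evaluation on $\htwo^{\m}$) is correct, and so is the dominance step. The fatal flaw is the sentence asserting that the domain $D=\{x\in\htwo^{\m}:Tx\in\htwo^{\m}\}$ is ``closed (as just shown)''. What you showed is that the graph is closed; the domain is the projection of the graph onto its first component, and projections of closed sets need not be closed. Indeed $D$ is typically \emph{not} closed: for $T(s)=s$ (one of the transfer functions this paper explicitly intends to cover) one has $D=\frac{1}{s+1}\htwo$, which is dense in $\htwo$ because $\frac{1}{s+1}$ is outer, yet proper, since for example $x(s)=(s+1)^{-3/4}\in\htwo$ while $sx\notin\htwo$.

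Because the Beurling--Lax--Halmos theorem applies only to \emph{closed} shift-invariant subspaces, it cannot be invoked on $D$; and invoking it on $\cl{D}$ destroys the argument rather than repairing it: for $T(s)=s$ one gets $\cl{D}=\htwo$, hence $\Theta_1=1$ and $\Theta_2=T\Theta_1=s\notin\Hf$, and $\binom{\Theta_1}{T\Theta_1}\htwo$ can no longer equal the graph, since points of $\cl{D}\setminus D$ have no image in $\htwo$ at all. (Your closing paragraph flags the multiplicity of $\Theta_1$ and the membership $T\Theta_1\in\Hf^{\m\times\m}$ as the delicate points, but the argument has already failed one step earlier.) The repair is exactly Georgiou's route: apply Beurling--Lax--Halmos not to the domain but to $\graph{\hat{\T}_{\max}}$ itself, which you have already proved is a closed subspace of $\htwo^{2\m}$ and which is shift-invariant because $(x,Tx)\mapsto(e^{-\tau s}x,\,e^{-\tau s}Tx)$ preserves it. This gives $\graph{\hat{\T}_{\max}}=G\htwo^{k}$ for some inner $G=\binom{M}{N}\in\Hf^{2\m\times k}$; one then shows $k=\m$, that $M$ is invertible in $\F^{\m\times\m}$, and that $T=NM^{-1}$, using that the graph contains $\binom{M_0}{N_0}\htwo^{\m}$ for any factorization $T=N_0M_0^{-1}$ and that, being an operator graph, it meets $\{0\}\times\htwo^{\m}$ only at zero. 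With that substitution, your dominance and closedness steps then finish the proof as you intended.
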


We now show that the maximum and minimum gains of $\Tinf$ can be obtained through a frequency domain formula written in terms of the singular values of $T(i\omega)$. For the types of transfer function typically encountered in practice, this allows the maximum and minimum gains to be computed to arbitrary precision by, for example, gridding over $\omega\in\mathbb{R}$.
\begin{theorem}
    \label{thm:gain_T_inf}
    Let $T(s) \in\mathcal{F}_\infty^{m \times m}$, and let $\Tinf$ be the associated operator from Definition~\ref{def:ltwoinf}.  
    Then
    \begin{equation}    
    \begin{aligned}     & \maxgain{\Tinf}=\esssup \\  & \mingain{\Tinf}= \essinf .
    \end{aligned}
    \end{equation}  
\end{theorem}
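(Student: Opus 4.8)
The plan is to transport everything to the frequency domain, reduce both gains to a frequency-wise singular value computation, and then separate the argument into an easy norm-bound direction and a harder achievability direction whose only real difficulty is respecting the domain constraint. First I would use that the Fourier transform is a Hilbert space isomorphism between $\ltwoinf$ and $\htwo^{m}$ to replace $\Tinf$ by the frequency-domain operator $\hat{\mathbf T}$ with $\graph{\hat{\mathbf T}}=\begin{pmatrix}M\\N\end{pmatrix}\htwo^{m}$; since the isomorphism is norm-preserving, it preserves both gains, so it suffices to compute them for $\hat{\mathbf T}$. For any $\hat u=M\hat v\in\dom{\hat{\mathbf T}}$ with $\hat y=\hat{\mathbf T}\hat u=N\hat v$, the factorisation $N=TM$ gives the pointwise boundary identity $\hat y(i\omega)=T(i\omega)\hat u(i\omega)$ for almost every $\omega$ (all of $M$, $N$, $\hat v$ admit radial limits a.e., and $M(i\omega)$ is invertible a.e.\ because $\det M\not\equiv0$). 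By Plancherel this yields the master formula $\frac{\norm{\hat y}^2}{\norm{\hat u}^2}=\frac{\int\norm{T(i\omega)\hat u(i\omega)}^2\,d\omega}{\int\norm{\hat u(i\omega)}^2\,d\omega}$.

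The easy direction is then immediate: the pointwise inequalities $\mingain{T(i\omega)}^2\norm{\hat u(i\omega)}^2\le\norm{T(i\omega)\hat u(i\omega)}^2\le\maxgain{T(i\omega)}^2\norm{\hat u(i\omega)}^2$, integrated in $\omega$ and divided by $\norm{\hat u}^2$, hold for every $\hat u\in\dom{\hat{\mathbf T}}$, and so deliver $\maxgain{\Tinf}\le\esssup$ and $\mingain{\Tinf}\ge\essinf$.

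The substance is the reverse (achievability) direction, and the main obstacle is that a candidate near-maximiser must lie in the constrained domain $M\htwo^{m}$, not merely in $\htwo^{m}$; moreover $T(i\omega)$ need not be continuous, so the frequency localisation needs measure-theoretic care. I would sidestep the domain issue by building the input through its preimage, as $\hat u_a=M\hat v_a$ with a concentrated $\hat v_a\in\htwo^{m}$, making membership in $\dom{\hat{\mathbf T}}$ automatic. Concretely, for any threshold $c$ below the essential supremum the set where $\maxgain{T(i\omega)}>c$ has positive measure; on it I can pick a frequency $\omega_0$ that is a Lebesgue point of $M(i\cdot)$ and $N(i\cdot)$ and at which $M(i\omega_0)$ is invertible. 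Taking a unit top right singular vector $\xi_0$ of $T(i\omega_0)$, setting the constant vector $\eta_0=M(i\omega_0)^{-1}\xi_0$, and choosing $f_a(s)=\sqrt{2a}\,/(s+a-i\omega_0)\in\htwo$, I have $\frac1{2\pi}|f_a(i\omega)|^2=P_a(\omega-\omega_0)$, the right-half-plane Poisson kernel, a probability density concentrating at $\omega_0$ as $a\to0^{+}$. Since $\eta_0$ is constant, $\hat v_a=f_a\,\eta_0\in\htwo^{m}$ and hence $\hat u_a=M\hat v_a\in\dom{\hat{\mathbf T}}$.

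With this input the master formula becomes $\frac{\norm{\hat y_a}^2}{\norm{\hat u_a}^2}=\frac{\int P_a(\omega-\omega_0)\norm{N(i\omega)\eta_0}^2\,d\omega}{\int P_a(\omega-\omega_0)\norm{M(i\omega)\eta_0}^2\,d\omega}$, and Poisson-kernel convergence at the Lebesgue point $\omega_0$ drives this to $\frac{\norm{N(i\omega_0)\eta_0}^2}{\norm{M(i\omega_0)\eta_0}^2}=\frac{\norm{T(i\omega_0)\xi_0}^2}{\norm{\xi_0}^2}=\maxgain{T(i\omega_0)}^2>c^2$. Letting $c$ approach the essential supremum (and $c\to\infty$ in the unbounded case) gives $\maxgain{\Tinf}\ge\esssup$. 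The identical construction with a bottom right singular vector and a threshold $c$ above the essential infimum gives $\mingain{\Tinf}\le\essinf$. Combining these with the easy direction yields both equalities. I expect the delicate points to be the a.e.\ invertibility of $M(i\omega)$ and the justification of the Poisson-kernel limit at Lebesgue points of the (merely $\mathcal{L}_\infty$) functions $M(i\cdot),N(i\cdot)$; everything else is routine.
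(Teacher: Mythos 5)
Your proposal is correct. For the maximum gain it is essentially the paper's own argument: the Plancherel/pointwise bound gives $\maxgain{\Tinf}\leq\esssup$, and your Poisson-kernel family $f_a(s)=\sqrt{2a}/(s+a-i\omega_0)$ is exactly the ``standard construction'' of frequency-concentrated inputs that the paper invokes without detail for the reverse inequality. For the minimum gain, however, you take a genuinely different route. The paper splits on whether $T$ is invertible in $\F^{m\times m}$: if not, it produces a kernel element of $N$ and uses innerness of the symbol to conclude $M\hat{w}\neq 0$, so the gain is zero; if so, it uses irreducibility, innerness and maximality (via \cite{Georgiou1992}, Proposition 7) to identify $\graph{\Tinfinverse}=\left(\begin{smallmatrix}N\\M\end{smallmatrix}\right)\htwo^{\m}$ and then reduces to the already-proved maximum-gain formula through $\mingain{\Tinf}=1/\maxgain{\Tinfinverse}$. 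You instead run the concentration argument symmetrically with bottom singular vectors, which needs no case split, no inverse operator, and in fact no innerness, irreducibility or maximality at all: both of your bounds hold verbatim for any operator in the set $\mathcal{T}$ of Definition~\ref{def:ltwoinf}, not just the maximal one, which is a mild strengthening and arguably cleaner. What the paper's detour buys is reusable machinery: the identification of the inverse graph is invoked again at the end of the proof of Theorem~\ref{thm:gain_T_tau}, and the relation $\mingain{\Tinf}=1/\maxgain{\Tinfinverse}$ is what justifies the LMI-based computation of the minimum gain in Remark~\ref{rem:mingain_inf}. One technical point you should make explicit to close your flagged gap: choose $\omega_0$ to be a Lebesgue point of the matrix-valued boundary functions $M(i\cdot)$ and $N(i\cdot)$ (entrywise); since these functions are bounded, the Lebesgue-point property then transfers to the scalar functions $\omega\mapsto\norm{M(i\omega)\eta_0}^2$ and $\omega\mapsto\norm{N(i\omega)\eta_0}^2$ for \emph{every} fixed vector $\eta_0$, which removes the apparent circularity of picking $\eta_0=M(i\omega_0)^{-1}\xi_0$ only after $\omega_0$ has been fixed, and the Poisson-kernel limit at such a point is then standard.
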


\begin{remark}
Note that $\Tinf-\alpha\id=\mathbf{T}_{T(s)-\alpha{}I}$, meaning that Theorem~\ref{thm:gain_T_inf} can also be used to find the maximum and minimum gains of $\Tinf-\alpha\id$. A similar observation holds when computing gains with Theorems~\ref{thm:gain_T_tau}~and~\ref{thm:LMIs} below.
\end{remark}

\begin{proof}
    The first part of this theorem is certainly known, but since our setting is slightly different from standard texts (e.g.~\cite[Thm. 4.4]{Zhou96}), we sketch a proof for completeness. First, it is easily shown that for any $u\in\dom{\Tinf}$ with $\norm{u}=1$, $\norm{\Tinf u}\leq \esssup$, which implies that
    \begin{equation} \label{eq:firststep}  \maxgain{\Tinf}\leq{} \esssup.
    \end{equation}   
    We now recall that a matrix $A\in\mathcal{H}_\infty ^{p\times q}$ is said to be \textit{irreducible} if the greatest common divisor of the highest order minors of $A$ equals 1, and \textit{inner} if $A(s)^*A(s)=I$.
    By \cite[Proposition~7]{Georgiou1992}, since $\Tinf$ is maximal there exists an inner and irreducible
    \begin{equation}
        G=\begin{pmatrix}
        M\\N    \end{pmatrix}\in\mathcal{H}_\infty ^{2m \times m}
    \end{equation}
    such that $\mathcal{G}_{\hat{\T}_{T(s)}} =G\mathcal{H}_2^m$ and $T=NM^{-1}$. A standard construction shows that for any $v\in\mathbb{C}^{m}$ and almost all $\omega$, there exists a sequence $\hat{w}_k\in\mathcal{H}_2^m$ such that $\norm{\hat{w}_k}\neq0$ and
    \begin{equation}   \lim_{k\rightarrow{}\infty}\frac{\norm{Nw_k}}{\norm{Mw_k}}=\frac{\norm{N(i\omega)v}}{\norm{M(i\omega)v}}.
    \end{equation}
    It then follows that for almost all $\omega$, $\maxgain{T(i\omega)}\leq\maxgain{\Tinf}$, as required. To prove the second claim, we show that
    \begin{equation}
    \mingain{\Tinf}=\begin{cases}
    0 \quad\quad {\text{if $T(s)$ is not invertible in $\mathcal{F}_\infty^{m\times m}$};}\\
    1/\maxgain{\Tinfinverse} \quad\quad\quad\quad\quad\quad\:\: \text{otherwise}.
    \end{cases}
    \end{equation}
    This is sufficient since in both cases, the right-hand-side of the above is equal to $\essinf$ (the second case follows from the formula we have already derived for $\maxgain{\Tinf}$). Suppose first that $T$ is not invertible. 
This also implies that $N$ is not invertible, and therefore there exists a non-zero $\hat{w}\in\mathcal{H}_2^m$ such that $N\hat{w}=0$. Since $G$ is inner, $\norm{G\hat{w}}=\norm{\hat{w}}\implies{}M\hat{w}\neq{}0$. Hence
    \begin{equation}
    \begin{pmatrix}
        M\hat{w}\\0     \end{pmatrix}=G\hat{w}\in\mathcal{G}_{\hat{\T}}.
    \end{equation}
     This establishes that if $T$ is not invertible, $\mingain{\Tinf}=0$. Suppose now that $T(s)$ is invertible and consider the subspace
    \begin{equation}    
    \begin{pmatrix}
    N\\M
    \end{pmatrix}\mathcal{H}_2^m.
    \end{equation}
    Since the symbol that defines this subspace is irreducible and inner, it then follows from Proposition~\ref{prop:operator} and \cite[Proposition~7]{Georgiou1992} that this subspace is the graph of $\Tinfinverse$.
    We then see that
\begin{align}
\label{mingain_by_inverse_inf}
\mingain{\Tinf} & =\inf_{\substack{\hat{u}\in\dom{\hat{T}} \\ \hat{u}\neq0}} \frac{\norm{T\hat{u}}}{\norm{\hat{u}}}
= \inf_{\substack{\hat{w}\in\mathcal{H}_2^m \\ \hat{w}\neq 0}} \frac{\norm{N\hat{w}}}{\norm{M\hat{w}}}
=1/\left(\sup_{\substack{\hat{w}\in\mathcal{H}_2^m \\ \hat{w}\neq 0}}\frac{\norm{M\hat{w}}}{\norm{N\hat{w}}}\right)=1/\maxgain{\Tinfinverse},
\end{align}
as required.
\end{proof}

\subsection{\gls{lti} operators acting on truncations of $\ltwoinf$}\label{sec:ltwotau}

Operators on extended spaces play a central role in the stability theory of dynamical systems. In this setting, the output properties of systems are characterised over all inputs in $\ltwotau$, for every $\tau>0$. Stability notions then correspond to checking whether the gain between these inputs and outputs remains bounded as $\tau\rightarrow{}\infty$. To allow \gls{srg} based tools to be applied in this setting, we consider the truncation of the operators from the previous section to $\ltwotau$.

\begin{definition}
    \label{def:ltwotau}
    Let $T\in\F^{\m\times{}\m}$, and suppose $\Tinf$ is given by Definition~\ref{def:ltwoinf}. For every $\tau>0$, define
    \[  \Ttau\coloneqq{}\trunc{}\Tinf{}\trunc.
    \]
\end{definition}

There are two main obstacles preventing the application of our results to the operators $\Ttau$.
\begin{enumerate}
    \item The operators $\Ttau$ as given by Definition~\ref{def:ltwotau} are not necessarily closed operators.
    \item The outputs of $\Ttau$ are not necessarily defined for all inputs in $\ltwotau$.
\end{enumerate}
As we will now explain, these issues are resolved through the notion of \textit{causal extendibility}. An operator $\T:\dom{\T}\subseteq{}\ltwoinf\rightarrow{}\ltwoinf$ is said to be \textit{causal} if
\[
\trunc{}\T{}u=\trunc{}\T{}\trunc{}u\;\text{for all}\,u\in\dom{\T}\,\text{and}\,\tau>0,
\]
and \textit{causally extendible} if it is causal and
\[
    \trunc \dom{\T} = \ltwotau \text{ for all } \tau > 0.
\]
As explained in \cite[\S{}7]{Georgiou1992}, the causally extendible operators are precisely the operators that can be extended uniquely to an operator on the \textit{extended $\ltwoinf$} space (the set of functions $f(t)$ defined on $[0,\infty)$ such that $\norm{\trunc f(t) } < \infty$ for all $\tau > 0$). The following result now shows this concept precisely resolves the two issues in applying our tools to the operators from Definition~\ref{def:ltwotau}.

\begin{proposition}
    Let $T\in\F^{\m\times{}\m}$, and suppose that $\Tinf$ and $\Ttau$ are given by Definitions~\ref{def:ltwoinf}~and~\ref{def:ltwotau} respectively. Then $\Ttau$ is a closed operator with $\dom{\Ttau}=\ltwotau$ for every $\tau>0$ if and only if $\Tinf$ is causally extendible.
\end{proposition}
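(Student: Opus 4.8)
The plan is to reduce both defining properties of $\Ttau$ to a single containment between the subspaces $\ltwotau$ and $\dom{\Tinf}$, after which the two implications become transparent; the only genuinely hard point is extracting causality from that containment.

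First I would unwind Definition~\ref{def:ltwotau}. Viewing $\Ttau$ as an operator on the Hilbert space $\ltwotau$ and using that $\trunc f=f$ for $f\in\ltwotau$, one gets $\Ttau f=\trunc\Tinf f$, which is defined exactly when $f\in\dom{\Tinf}$. Hence $\dom{\Ttau}=\ltwotau\cap\dom{\Tinf}$, so the claim $\dom{\Ttau}=\ltwotau$ is equivalent to $\ltwotau\subseteq\dom{\Tinf}$. Alongside this I would record the elementary observation that the stated definition of causality presupposes that $\Tinf\trunc u$ is defined for every $u\in\dom{\Tinf}$, so a causal $\Tinf$ automatically has a truncation-invariant domain; combined with $\trunc\dom{\Tinf}=\ltwotau$ this again yields $\ltwotau\subseteq\dom{\Tinf}$, and conversely $\ltwotau\subseteq\dom{\Tinf}$ forces $\trunc\dom{\Tinf}=\ltwotau$.

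Next I would dispose of closedness, which turns out to be automatic once the domain is full. Since $\Tinf$ is closed (Proposition~\ref{prop:operator}) and $\ltwotau$ is a closed subspace of $\ltwoinf$, the restriction $\Tinf|_{\ltwotau}\colon\ltwotau\to\ltwoinf$ has closed graph; if moreover $\ltwotau\subseteq\dom{\Tinf}$, then this restriction is defined on all of the Hilbert space $\ltwotau$, so the closed graph theorem makes it bounded. Composing with the contraction $\trunc$ shows $\Ttau=\trunc\,\Tinf|_{\ltwotau}$ is bounded, hence closed. Thus ``closed'' adds nothing beyond ``$\dom{\Ttau}=\ltwotau$'', and the whole proposition collapses to the equivalence $\bigl[\ltwotau\subseteq\dom{\Tinf}\ \forall\tau\bigr]\iff\bigl[\Tinf\ \text{causally extendible}\bigr]$.

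With these reductions the direction $(\Leftarrow)$ is immediate: causal extendibility gives causality together with $\trunc\dom{\Tinf}=\ltwotau$, whence truncation-invariance yields $\ltwotau\subseteq\dom{\Tinf}$, and the previous paragraph supplies closedness. For $(\Rightarrow)$ the containment $\ltwotau\subseteq\dom{\Tinf}$ already delivers $\trunc\dom{\Tinf}=\ltwotau$, so the entire content---and the main obstacle---is to prove that $\ltwotau\subseteq\dom{\Tinf}$ for all $\tau$ implies $\Tinf$ is causal. Here I would invoke the structure of the maximal operator: by \cite[Proposition~7]{Georgiou1992} its graph is $G\htwo^m$ with $G=\bigl(\begin{smallmatrix}M\\N\end{smallmatrix}\bigr)$ inner and irreducible, so $\dom{\Tinf}=M\htwo^m$. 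Since $\bigcup_{\tau>0}\ltwotau$ is dense in $\ltwoinf\cong\htwo^m$, the containment forces $M\htwo^m$ to be dense, i.e.\ $M$ outer. Causality would then follow from a no-anticipation argument: the graph $G\htwo^m$ is invariant under the right shift (multiplication by $e^{-\delta s}$, $\delta\ge0$), while outerness of $M$ makes the deconvolution by $M$ causal, so any input supported on $[\tau,\infty)$ is mapped by $\Tinf$ to an output again supported on $[\tau,\infty)$, which is exactly $\trunc\Tinf u=\trunc\Tinf\trunc u$. I expect the technical heart to lie precisely in this last step---the matrix-valued inner/outer factorisation theory justifying ``dense image $\Rightarrow$ outer'' and ``outer $\Rightarrow$ causal deconvolution''---and I would lean on the Hardy space machinery of \cite{Georgiou1992} (in particular its \cite[\S 7]{Georgiou1992} characterisation of causal extendibility through the extended space) to make it rigorous.
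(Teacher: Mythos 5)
Your reductions are sound, and in places more self-contained than the paper's own argument: reading $\Ttau$ as an operator on $\ltwotau$ gives $\dom{\Ttau}=\ltwotau\cap\dom{\Tinf}$; the observation that closedness is automatic once $\ltwotau\subseteq\dom{\Tinf}$ (restrict the closed operator $\Tinf$ to the closed subspace $\ltwotau$, apply the closed graph theorem, compose with the bounded map $\trunc$) is correct; and the implication from causal extendibility to full domain and closedness is complete. The paper's proof has the same skeleton up to this point --- it also uses the closed graph theorem to conclude that $\Ttau$ is bounded on $\ltwotau$ for every $\tau$ --- but it then finishes both directions in one line by citing \cite[Proposition 12]{Georgiou1992}, which is invoked exactly for the two facts you still owe: boundedness of all truncations implies causal extendibility, and causal extendibility implies the truncations are closed.

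The genuine gap is in your forward direction, and it sits precisely where you flag it. The statement ``$\ltwotau\subseteq\dom{\Tinf}$ for all $\tau$ implies $\Tinf$ is causal'' is the entire mathematical content of the proposition, and your argument for it is an assertion, not a proof: ``outerness of $M$ makes the deconvolution by $M$ causal.'' Concretely, one must show that $\hat w\in\htwo^m$ and $M\hat w\in e^{-\tau s}\htwo^m$ force $\hat w\in e^{-\tau s}\htwo^m$. Outerness of $M$ only gives that $e^{\tau s}\hat w=M^{-1}\bigl(e^{\tau s}M\hat w\bigr)$ lies in the Smirnov class $N^{+}$; to exclude exponential growth and conclude membership in $\htwo^m$ one needs the Smirnov maximum principle ($N^{+}$ together with square-integrable boundary values implies $\htwo$), an ingredient you neither prove nor cite (the shift-invariance of $G\htwo^m$ you mention gives only the trivial inclusion, not this converse). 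Your proposed fallback does not close the hole either: the characterisation of causal extendibility in \cite[Proposition 11]{Georgiou1992} is that $\inf_{\Re(s)\geq\alpha}\mingain{M(s)}>0$ for some $\alpha$, which is strictly stronger than outerness --- $M(s)=1/(s+1)$ is outer yet violates it --- so ``$M$ outer'' cannot simply be fed into that result. The irony is that you already possess what the paper actually uses: your closed-graph step shows $\Ttau$ is bounded on $\ltwotau$ for every $\tau$, which is exactly the hypothesis of \cite[Proposition 12]{Georgiou1992}; invoking it there would have completed the proof without any inner--outer analysis.
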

\begin{proof}
Since $\Ttau$ is closed with $\dom{\Ttau}=\ltwotau$ and we can equivalently define $\Ttau$ as an operator from $\ltwotau$ to $\ltwotau$ (which is itself a Hilbert space), by the closed graph theorem $\Ttau:\ltwotau\rightarrow{}\ltwotau$ is bounded for every $\tau>0$. Hence $\Tinf$ is causally extendible by \cite[Proposition 12]{Georgiou1992}. Conversely if $\Tinf$ is causally extendible, $\dom{\Ttau}=\ltwotau$ by definition, and closed by \cite[Proposition 12]{Georgiou1992}. 
\end{proof}

\begin{remark}
Many frequently encountered operators are causally extendible. For example if $T(s)$ is proper and real-rational, or in $\Hf^{\m\times{}\m}$, then $\Tinf$ is causally extendible. Necessary and sufficient conditions for the causal extendibility of elements in $\F^{\m\times{}\m}$ are given in \cite[Proposition 11]{Georgiou1992}.
\end{remark}

We now turn our attention to computing the \glspl{srg} of the operators $\Ttau$. First observe that given any causal operator $\Tinf$, for any $\tau_2\geq{}\tau_1>0$,
\begin{equation}\label{eq:graphorder}
\mathbf{P}_{\tau_1}\graph{\Tinf}\subseteq\mathbf{P}_{\tau_2}\graph{\Tinf}.
\end{equation}
This shows that for any $0 < \tau_1 \leq \tau_2 \leq \tau_3 \ldots$, 
\[
\T_{T(s)}^{\tau_1} \preceq \T_{T(s)}^{\tau_2} \preceq \T_{T(s)}^{\tau_3} \ldots
\]
is a monotone sequence of operators. Hence Algorithm~\ref{alg:outer} can be used to compute the \gls{srg} when $\tau \rightarrow \infty$ according to Theorem~\ref{circle_thm}. Equation~\eqref{eq:graphorder} also shows that 
\begin{equation}
    \label{eq:set_inclusion}
   \srg{\Tinf^{\tau_1}} \subseteq \srg{\Tinf^{\tau_2}} \text{ for all } \tau_1 \leq \tau_2. 
\end{equation}

We now show that the maximum and minimum gains in \eqref{eq:neededgain} for this type of operator can be calculated using frequency domain formulas based on the singular values of $T(s)$.

\begin{theorem}
    \label{thm:gain_T_tau}
    Let $T(s) \in\mathcal{F}_\infty^{m \times m}$, and $\Tinf$ be the associated operator from Definition~\ref{def:ltwoinf}. Now suppose that $\Tinf$ is causally extendible and let $\Ttau$ be given by Definition~\ref{def:ltwotau}.
    Then
    \begin{equation}
    \begin{aligned}
        & \lim_{\tau \rightarrow \infty} \maxgain{\Ttau}=\sup_{\Re s > 0}\maxgain{T(s)} \\
        & \lim_{\tau \rightarrow \infty} \mingain{\Ttau}=\inf_{\Re s > 0}\mingain{T(s)}.
    \end{aligned}
    \end{equation}  
\end{theorem}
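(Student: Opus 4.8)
The plan is to prove the two identities by matching upper and lower bounds, treating the maximum gain first and then deducing the minimum gain by an inversion argument. Since the preceding discussion shows that $\srg{\Tinf^{\tau_1}}\subseteq\srg{\Tinf^{\tau_2}}$ for $\tau_1\leq\tau_2$, the quantity $\maxgain{\Ttau}$ is nondecreasing and $\mingain{\Ttau}$ is nonincreasing in $\tau$, so both limits exist (allowing the values $\infty$ and $0$ respectively) and only the identification of the limits remains.

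For the maximum gain upper bound I would split on whether $\sup_{\Re s>0}\maxgain{T(s)}$ is finite. If it is infinite there is nothing to prove. If it equals some $C<\infty$, then $T$ is analytic and bounded on the open right half-plane, hence $T\in\Hf^{m\times{}m}$; in this case the maximal operator $\Tinf$ is simply multiplication by $T$ on all of $\ltwoinf$, which is bounded with $\maxgain{\Tinf}=\norm{T}_{\Hf}=C$ (this also equals $\esssup$ by Theorem~\ref{thm:gain_T_inf} together with the maximum modulus principle). For any $u\in\ltwotau\subseteq\ltwoinf$, truncation only decreases the norm, so $\norm{\Ttau u}=\norm{\trunc\Tinf u}\leq\norm{\Tinf u}\leq C\norm{u}$, giving $\maxgain{\Ttau}\leq C$ for every $\tau$, and hence the upper bound in the limit.

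The substantive part is the lower bound $\lim_{\tau\to\infty}\maxgain{\Ttau}\geq\maxgain{T(s_0)}$ for each fixed $s_0=\sigma+i\omega_0$ with $\sigma>0$ in the region where $T$ is analytic; taking the supremum over such $s_0$ then recovers $\sup_{\Re s>0}\maxgain{T(s)}$ (approaching any contributing pole from the right and, for unbounded symbols such as $s$ or $1/(1-e^{-s})$, sending $|s_0|\to\infty$ with $\sigma$ beyond any relevant pole). Here I would test $\Ttau$ on truncated, frequency-localised exponentials: with $v$ a maximal right singular vector of $T(s_0)$, take inputs whose Laplace transform along the line $\Re s=\sigma$ is a narrow bump at $\omega_0$ times $v$, that is, time signals of the form $e^{s_0 t}\psi(t)v$ with a slowly varying envelope $\psi$ supported in $[0,\tau]$. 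The key mechanism is that the weight $e^{2\sigma t}$ concentrates the $\ltwotau$-mass of both input and output near $t=\tau$, where the output is well approximated by the steady-state response $T(s_0)e^{s_0 t}\psi(t)v$, so the ratio $\norm{\Ttau u}/\norm{u}$ tends to $\maxgain{T(s_0)}$ as $\tau\to\infty$. I expect this to be the main obstacle, for two reasons: one must show the transient and boundary contributions are negligible in the norm ratio uniformly enough to survive the limit, and one must argue entirely in the Laplace / weighted-$L_2$ domain rather than via impulse-response convolution, since $\F^{m\times{}m}$ admits symbols with distributional kernels. The weighted Plancherel identity $\int_0^\infty e^{-2\sigma t}\norm{y(t)}^2\,dt=\tfrac{1}{2\pi}\int_{-\infty}^{\infty}\norm{T(\sigma+i\omega)\hat{u}(\sigma+i\omega)}^2\,d\omega$ keeps the estimate in the frequency domain and is the natural tool for controlling these terms.

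Finally, the minimum gain follows by reducing to the maximum gain, mirroring the inversion argument used in the proof of Theorem~\ref{thm:gain_T_inf}. If $T$ is not invertible in $\F^{m\times{}m}$ then $\mingain{\Ttau}=0$ and $\inf_{\Re s>0}\mingain{T(s)}=0$, so both sides vanish. If $T$ is invertible, causality makes the finite-section operator lower triangular in time, so inversion commutes with truncation, giving $(\Ttau)^{-1}=\Ttauinverse$ and hence $\mingain{\Ttau}=1/\maxgain{\Ttauinverse}$. Applying the maximum-gain identity already established to the symbol $T(s)^{-1}$ then yields $\lim_{\tau\to\infty}\maxgain{\Ttauinverse}=\sup_{\Re s>0}\maxgain{T(s)^{-1}}=1/\inf_{\Re s>0}\mingain{T(s)}$, and taking reciprocals completes the proof.
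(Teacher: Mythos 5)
Your architecture transposes the paper's: the paper takes the maximum-gain identity as known and spends its effort on the minimum gain, using exponentially growing test inputs followed by an inversion step, whereas you prove the maximum gain directly with that same mechanism and then invert. The mechanism you describe (inputs $e^{s_0t}\psi(t)v$ whose weighted mass concentrates at the horizon $\tau$) is indeed the right one, and your monotonicity and upper-bound observations are fine. But the step you flag as ``the main obstacle'' is not an estimate to be tidied up --- it is the missing core of the proof, and the tool you propose cannot supply it. For $T=NM^{-1}\in\F^{m\times m}$ the operator is defined only through its graph $G\htwo\subseteq\ltwoinf\times\ltwoinf$, and $\Ttau$ only through truncation of that graph. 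Your test input $u_\tau=e^{s_0t}\psi(t)v$ generally does not lie in $\dom{\Tinf}$ (when $T$ has right-half-plane poles, $M^{-1}\hat{u}_\tau\notin\htwo$ for generic envelopes $\psi$), so ``the output'' is not computable by multiplication by $T$; and the weighted Plancherel identity you invoke cannot be applied to the truncated pair $(u_\tau,\Ttau u_\tau)$ either, because the transform of the truncated output is not $T\hat{u}_\tau$. What is needed is exactly what the paper constructs: the extension $\Tinfs$ with graph $G\htwo^\sigma$ over exponentially weighted signals, together with the identity $\trunc{}\graph{\Tinfs}=\trunc{}\graph{\Tinf}$ of \eqref{eq:keytrunc}, which is what certifies that truncations of the growing pairs $G(s')ve^{s't}+z$ are legitimate elements of $\graph{\Ttau}$. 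Without this step (or an equivalent) the lower bound never gets started; and no boundary-value shortcut exists, since $\mathrm{ess}\sup_{\omega}\maxgain{T(i\omega)}$ is strictly smaller than $\sup_{\Re s>0}\maxgain{T(s)}$ precisely when $T$ has right-half-plane poles, which is the case the theorem is really about.

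The inversion step has a second, independent gap: you split on invertibility of $T$ in $\F^{m\times m}$, but the correct split is on whether $\inf_{\Re s>0}\mingain{T(s)}>0$. Your appeal to ``the maximum-gain identity already established'' for the symbol $T(s)^{-1}$ requires $\Tinfinverse$ to be causally extendible --- that is a hypothesis of the identity --- and this can fail even when $T$ is invertible and $\Tinf$ is causally extendible. Take $T(s)=1/(s+1)$: then $T(s)^{-1}=s+1$, and no $u\in\ltwotau$ that is not weakly differentiable on $[0,\tau]$ is the truncation of an element of $\dom{\mathbf{T}_{s+1}}$, so $\mathbf{T}_{s+1}$ is not causally extendible and your reduction cannot be run (here $\inf_{\Re s>0}\mingain{T(s)}=0$, and the paper concludes directly from the exponential-input upper bound instead). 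Even in the case $\inf_{\Re s>0}\mingain{T(s)}>0$, causal extendibility of $\Tinfinverse$ does not follow from the heuristic that causal operators are ``lower triangular in time''; it is a statement about domains, which the paper establishes from $\mingain{N(s)}\geq\mingain{T(s)}\mingain{M(s)}$ together with \cite[Proposition~11]{Georgiou1992}. So to repair your proof you need (i) the exponential-input bound $\lim_{\tau\to\infty}\mingain{\Ttau}\leq\inf_{\Re s>0}\mingain{T(s)}$, which settles every case with $\inf=0$, invertible or not, and (ii) in the case $\inf>0$, a proof that $\Tinfinverse$ is causally extendible before inverting --- at which point your argument coincides with the paper's.
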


\begin{proof}
The first result is well known, and so we only focus on the second.
We start by finding an extension of the operator $\Tinf$ that acts on inputs that may be exponentially growing. Recall from the proof of Theorem~\ref{thm:gain_T_inf} that 
\begin{equation}
    \mathcal{G}_{\Tinf} = G\htwo \text{ for some } G = \begin{pmatrix}
    M\\N
\end{pmatrix} \in \mathcal{H}_\infty^{2m \times m},
\end{equation}
such that $G$ is inner and irreducible. 
Now let $\mathcal{H}_2^\sigma=\{\hat{w}(s):\hat{w}(s+\sigma{})\in\mathcal{H}_2\}$, and consider the operator $\Tinfshat$ with graph $\graph{\Tinfshat}=G\htwo^\sigma.$
Since for any $\sigma\geq{}0$, $\htwo\subseteq{}\htwo^\sigma$, it follows that
\begin{equation}\label{eq:incpro}
\graph{\Tinfhat}\subseteq{}\graph{\Tinfshat}\;\;\text{and}\;\;
\begin{pmatrix}
    \hat{u}\\\hat{y}
\end{pmatrix}\in\graph{\Tinfhat}\implies
\begin{pmatrix}
    \hat{u}\\\hat{y}
\end{pmatrix}\in\graph{\Tinfshat}.
\end{equation}
That is, for inputs in the domain of $\Tinfhat$ the output of both operators agree, but the domain of $\Tinfshat$ includes some additional inputs in $\htwo^\sigma$. We may similarly extend $\Tinf$. Let $\mathcal{L}_2^\sigma=\{f:fe^{-\sigma{}t}\in\mathcal{L}_2\}$, and $\mathbf{F}:\ltwoinf\rightarrow{}\htwo^{\m}$ denote the Fourier transform. Define $\Tinfs$ to be the operator with graph equal to the mapping of the graph of $\Tinfshat$ through the transformations
\begin{center}
\begin{tikzcd}
\mathcal{L}_2^\sigma \arrow[r, shift left, "e^{-\sigma{}t}f"] & 
\mathcal{L}_2 \arrow[r, shift left, "\mathbf{F}"] \arrow[l, shift left, "e^{\sigma{}t}f", {yshift=-2pt}]&
\htwo{} \arrow[r, shift left, "\hat{f}(s-\sigma)"] \arrow[l, shift left, "\mathbf{F}^{-1}", {yshift=-2pt}]&
\htwo^\sigma. \arrow[l, shift left, "\hat{f}(s+\sigma)" {yshift=-2pt}]
\end{tikzcd}
\end{center}
It is easily shown that $\Tinfs$ extends $\Tinf$. However, since $\trunc{}f\in\ltwotau$ if and only if $\trunc{}e^{\sigma{}t}f\in\ltwotau$, it also follows that
\begin{equation}\label{eq:keytrunc}
\trunc{}\graph{\Tinfs}=\trunc{}\graph{\Tinf}.
\end{equation}
We will now use this fact to demonstrate that particular exponentially growing functions belong to $\trunc{}\graph{\Tinf}$. Let $s'\in\Comp$ and $v\in\Comp^m$, and suppose that $\mathrm{Re}(s')>0$. For any $\sigma>\mathrm{Re}(s')$,
\[
\hat{w}=v\frac{1}{s-s'}\in\htwo^\sigma.
\]
Hence $G\hat{w}\in\graph{\Tinfshat}$. Putting
\[
\hat{z}=\left(G(s)-G(s')\right)v\frac{1}{s-s'},
\]
we also see that
\[
G\hat{w}=Gv\frac{1}{s-s'}=
    G(s')
v\frac{1}{s-s'}+\hat{z}.
\]
Now $\hat{z}$ is analytic on the open right-half plane (note that the simple pole at $s'$ will be cancelled by a zero), and
\[
\norm{\hat{z}(i\omega)}\leq{}\frac{\norm{v}(\norm{G}_{\infty}+\vert{}G(s')\vert{})}{\sqrt{(\omega-\mathrm{Im}(s'))^2+\mathrm{Re}(s')^2}},
\]
from which it can be shown that $\hat{z}\in\htwo$. Hence
\[
\begin{pmatrix}
    u\\y
\end{pmatrix}=G(s')ve^{s't}+z\in\graph{\Tinfs},
\]
where the size of $z$ becomes arbitrarily small compared to the size of $e^{{s't}}$ as $t\rightarrow{}\infty$. It then follows that
\[
\lim_{\tau\rightarrow{}\infty}\frac{\norm{\trunc{}y}}{\norm{\trunc{}u}}=\frac{\norm{N(s')v}}{\norm{M(s')v}},
\]
and since both $s'$ and $v$ were arbitrary, we conclude from \eqref{eq:keytrunc} that
\[
    \lim_{\tau\rightarrow\infty{}}\mingain{\Ttau}
    \leq{}\inf_{\text{Re}(s)>0}\mingain{T(s)}.
\]

This proves the result in the case that the infimum on the right-hand-side of the above is zero, and so we now assume that
\[
\inf_{\Re(s)>0}\mingain{T(s)}>0.
\]
This implies that $T^{-1}$ exists, and since $T=NM^{-1}$, $N^{-1}$ exists. Since $\Tinf$ is causally extendible, by~\cite[Proposition~11]{Georgiou1992} there exists an $\alpha > 0$ such that
\[
\inf_{\Re(s)\geq{}\alpha}\mingain{M(s)}>0.
\]
We know that 
\begin{equation}
    \mingain{T(s)} \leq \mingain{N(s)}\maxgain{M^{-1}(s)} = \frac{\mingain{N(s)}}{\mingain{M(s)}},
\end{equation}
which gives us $\mingain{N(s)} \geq \mingain{T(s)}\mingain{M(s)}$. This shows that
\[
\inf_{\Re(s)\geq{}\alpha}\mingain{N(s)}>0,
\]
and therefore $\Tinfinverse$ is causally extendible~\cite[Proposition~11]{Georgiou1992}. 
As $T(s)$ is invertible the graph of $\hat{\T}_{T(s)^{-1}}$ is
\begin{equation}
        \mathcal{G}_{\hat{\T}_{T(s)^{-1}}} = \begin{pmatrix}
    N\\M
\end{pmatrix}\mathcal{H}_2 
\end{equation}
and because $\Tinfinverse$ is causally extendible $\mathcal{G}_{\Ttauinverse}=\trunc\mathcal{G}_{\Tinfinverse}$. We can now follow the same argument as at the end of the proof of Theorem~\ref{thm:gain_T_inf} to conclude the proof. 
\end{proof}

\begin{remark}
    The differences between the \gls{srg} of $\Tinf$ and the \gls{srg} of $\Ttau$ when $\tau \rightarrow \infty$ are closely related to the soft and hard \gls{srg} as defined in~\cite{chen2025a}. In~\cite{chen2025a} they define the operators over the extended $\ltwoinf$ space, so the definitions are not equivalent. However, it follows from~\eqref{eq:set_inclusion} that the corresponding soft and hard \gls{srg} are the same sets as the \glspl{srg} of $\Tinf$ and $\Ttau$ when $\tau \rightarrow \infty$, respectively.
\end{remark}

\subsection{Operators with state-space representations}\label{sec:ss}
A subclass of the transfer matrices $T(s) \in \mathcal{F}_\infty^{m \times m}$ are the real rational proper transfer matrices, which can be written on state-space form $(A,B,C,D)$ such that $$T(s) = \C(sI-\A)^{-1}\B + \D.$$ 
In the time domain the outputs of the operators $\Tinf$ and $\Ttau$ are defined through the equation
\begin{equation}
    \label{ss_time}
    \y(t) = \int_0^t \C e^{\A(t-t_1)}\B u(t_1)dt_1 + \D u(t).
\end{equation}
When the transfer functions are specified by their state-space representation, the maximum gain for the corresponding operators can be computed using \glspl{lmi} following the Bounded Real Lemma. The minimum gain can also be computed using this result, as explained in Remark~\ref{rem:mingain_inf}, which is stated after the Bounded Real Lemma.
\begin{theorem}[Bounded Real Lemma]
    Consider $T(s) = \C(sI-\A)^{-1}\B + \D \in \mathcal{F}_\infty^{m \times m}$
    such that $(\A,\B)$ is controllable and $(\A,\C)$ is observable, and let $\Tinf$ and $\Ttau$ be given as in Definitions~\ref{def:ltwoinf} and~\ref{def:ltwotau}. If $\A$ has imaginary axis eigenvalues, then $\maxgain{\Tinf}\!=\!\lim_{\tau \rightarrow \infty}\maxgain{\Ttau}\!=\!\infty$, else 
    \begin{align}    
    & \lim_{\tau \rightarrow \infty} \maxgain{\Ttau} = \inf \gamma \\
    & \text{such that there exists a}\;\; \P = \P^\top \succeq 0 \;\; \text{satisfying}: \\
    & \begin{bmatrix}
        \A^\top \P + \P \A + \C^\top \C & \P \B +  \C^\top \D   \\
        \B^\top \P + \D^\top \C & \D^\top \D - \gamma^2 I
    \end{bmatrix}  \preceq 0.
    \label{LMI upper bound}
    \end{align} 
    If we remove the constraint $\P \succeq 0$ from~\eqref{LMI upper bound}, the \gls{lmi} gives $\maxgain{\Tinf} = \inf \gamma$ instead. 
\label{thm:LMIs}
\end{theorem}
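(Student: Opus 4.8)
The plan is to combine Theorems~\ref{thm:gain_T_inf} and~\ref{thm:gain_T_tau}, which already express the two operator gains in terms of $T(s)$, with a Kalman--Yakubovich--Popov (KYP) argument that converts the resulting frequency-domain bounds into the stated matrix inequalities. Since $(\A,\B)$ is controllable and $(\A,\C)$ is observable the realisation is minimal, so the poles of $T(s)$ are exactly the eigenvalues of $\A$; I would organise the proof around the location of these eigenvalues.

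If $\A$ has an imaginary axis eigenvalue $i\omega_0$, then by minimality $i\omega_0$ is a pole of $T(s)$, so $\maxgain{T(i\omega)}\to\infty$ as $\omega\to\omega_0$ and $\maxgain{T(s)}\to\infty$ as $s\to i\omega_0$ with $\Re s>0$. Theorem~\ref{thm:gain_T_inf} then yields $\maxgain{\Tinf}=\esssup=\infty$ and Theorem~\ref{thm:gain_T_tau} yields $\lim_{\tau\to\infty}\maxgain{\Ttau}=\sup_{\Re s>0}\maxgain{T(s)}=\infty$, which is the first assertion. For the rest I assume $\A$ has no imaginary axis eigenvalues, so $T(i\omega)$ is continuous and bounded with limit $\D$ as $\omega\to\infty$, whence $\sup_\omega\maxgain{T(i\omega)}=\esssup$. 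The claim for the case \emph{without} the constraint $\P\succeq0$ is then exactly the KYP lemma: writing the inequality as $M+\left[\begin{smallmatrix}\A^\top\P+\P\A & \P\B\\ \B^\top\P & 0\end{smallmatrix}\right]\preceq0$ with $M=\left[\begin{smallmatrix}\C^\top\C & \C^\top\D\\ \D^\top\C & \D^\top\D-\gamma^2\id\end{smallmatrix}\right]$, a symmetric solution $\P$ exists (using controllability of $(\A,\B)$ and the absence of imaginary axis eigenvalues) if and only if the frequency-domain inequality $T(i\omega)^*T(i\omega)\preceq\gamma^2\id$ holds for all $\omega\in\Real\cup\{\infty\}$. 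The infimal feasible $\gamma$ therefore equals $\sup_\omega\maxgain{T(i\omega)}=\maxgain{\Tinf}$ by Theorem~\ref{thm:gain_T_inf}.

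The substantive step, and where I expect the main obstacle, is to show that imposing $\P\succeq0$ replaces the imaginary-axis supremum above by the open-right-half-plane supremum $\sup_{\Re s>0}\maxgain{T(s)}=\lim_{\tau\to\infty}\maxgain{\Ttau}$ from Theorem~\ref{thm:gain_T_tau}. I would distinguish two subcases. If $\A$ is Hurwitz, the maximum modulus principle gives $\sup_{\Re s>0}\maxgain{T(s)}=\sup_\omega\maxgain{T(i\omega)}$, so the target value is unchanged, and it remains only to check that the sign constraint does not raise the infimum; this follows from the classical stable Bounded Real Lemma, whose stabilising Riccati solution admits a gramian representation and is hence positive semidefinite. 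If instead $\A$ has an eigenvalue $\lambda$ with $\Re(\lambda)>0$, then $T(s)$ has an open-right-half-plane pole and $\sup_{\Re s>0}\maxgain{T(s)}=\infty$, so I must show the $\P\succeq0$ inequality is infeasible for every finite $\gamma$. The key computation is that its $(1,1)$ block forces $\A^\top\P+\P\A+\C^\top\C\preceq0$; evaluating this on an eigenvector $v$ of $\A$ gives $2\Re(\lambda)\,v^*\P v\leq-\norm{\C v}^2$, and since $\Re(\lambda)>0$ and $\P\succeq0$ make the left-hand side nonnegative while the right-hand side is nonpositive, both must vanish, forcing $\C v=0$. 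This contradicts observability of $(\A,\C)$ by the Popov--Belevitch--Hautus test, so the infimum is $\infty$, in agreement with $\lim_{\tau\to\infty}\maxgain{\Ttau}$.

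The hardest points lie in this last paragraph: pinning down the KYP lemma at the point $\omega=\infty$ and with non-strict inequalities, so that the infima are attained exactly rather than merely approached, and verifying in the Hurwitz subcase that a positive semidefinite solution realising the optimum indeed exists. Both are classical facts, but their correct invocation depends on using controllability and observability exactly where they appear above.
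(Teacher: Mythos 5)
Your proposal is correct, but there is no proof in the paper to compare it against: Theorem~\ref{thm:LMIs} is stated as the classical Bounded Real Lemma, with its connection to $\Tinf$ and $\Ttau$ left to rest implicitly on Theorems~\ref{thm:gain_T_inf} and~\ref{thm:gain_T_tau}. Your argument supplies precisely that missing assembly, and each step is sound: minimality makes the eigenvalues of $\A$ exactly the poles of $T(s)$, so an imaginary-axis eigenvalue forces both $\esssup$ and $\sup_{\Re s>0}\maxgain{T(s)}$ to be infinite; the non-strict KYP lemma (which is exactly where controllability enters) identifies the infimum without the constraint $\P\succeq0$ with $\sup_{\omega}\maxgain{T(i\omega)}=\esssup=\maxgain{\Tinf}$; and your two subcases for the constrained problem --- maximum modulus when $\A$ is Hurwitz, and the eigenvector/PBH contradiction showing infeasibility for every finite $\gamma$ when $\A$ has a right-half-plane eigenvalue --- correctly match the constrained infimum to $\sup_{\Re s>0}\maxgain{T(s)}=\lim_{\tau\to\infty}\maxgain{\Ttau}$. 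Two of the points you flag as delicate are in fact immediate. In the Hurwitz subcase you need neither Riccati theory nor an optimum-attaining solution: the $(1,1)$ block of \eqref{LMI upper bound} gives $\A^\top\P+\P\A=-Q$ with $Q\succeq\C^\top\C\succeq0$, and the unique solution of this Lyapunov equation is $\P=\int_0^\infty e^{\A^\top t}Qe^{\A t}\,dt\succeq0$, so \emph{every} symmetric feasible $\P$ is automatically positive semidefinite and the constraint is vacuous; since the statement involves only an infimum, attainment is never needed. Similarly, the $\omega=\infty$ point of the frequency-domain inequality is automatic, because $T(i\omega)\to\D$ forces $\D^\top\D\preceq\gamma^2 I$ by continuity. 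The one hypothesis you use silently is causal extendibility of $\Tinf$, required to invoke Theorem~\ref{thm:gain_T_tau}; it holds here because $T(s)$ is proper and real-rational, as noted in Section~\ref{sec:ltwotau}.
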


\begin{remark}
    \label{rem:mingain_inf}
    It follows from the proofs of Theorems~\ref{thm:gain_T_inf}~and~\ref{thm:gain_T_tau}, that in the state-space case, the minimum gain of the operators $\Tinf$ and $\Ttau$ can also be computed using Theorem~\ref{thm:LMIs}. In particular, if $D$ is not invertible, the minimum gain of both operators is equal to zero. Otherwise, they are equal to $1/\inf{}\gamma$, where $\inf{}\gamma$ is computed as in Theorem~\ref{thm:LMIs}, and
    \[
    A\leftarrow{}A-BD^{-1}C,\,B\leftarrow{}BD^{-1},\,C\leftarrow{}-D^{-1}C,D\leftarrow{}D^{-1}
    \]
    have been substituted into the corresponding \gls{lmi}.
    Alternatively, we can obtain the minimum gain by solving a related \gls{lmi} following~\cite{Bridgeman} (see also \cite[\S{4}]{degroot2025}). 
\end{remark}

\section{Examples}
\label{sec:examples}
In this section, we illustrate examples of the closures of the \gls{srg} for a variety of operators that are defined by $T(s) \in \mathcal{F}_\infty^{m \times m}$, and discuss the difference between the \glspl{srg} of $\Tinf$ and $\Ttau$ when $\tau \rightarrow \infty$. The specific operators we will consider are
\begin{equation}
    T_1(s) = \frac{e^{-s}}{s+1}, \, T_2(s) = \begin{bmatrix}
         \frac{1}{s-1} & \frac{s}{s-1} \\ \frac{s+1}{s+3} & \frac{1}{s+2}
     \end{bmatrix} \text{ and } T_3(s) = \frac{1}{s}. 
\end{equation}
Their \glspl{srg} for both $\Tinf$ and $\Ttau$ when $\tau \rightarrow \infty$ are shown in Figures~\ref{fig:delay},~\ref{fig:MIMO} and~\ref{fig:integrator}.

The first example (Figure~\ref{fig:delay}) shows that the \glspl{srg} for the two different operators $\Tinf$ and $\Ttau$ can be quite similar if $T(s) \in \Hf^{m \times m}$. In the next example (Figure~\ref{fig:MIMO}) we can see that this is not the case when $T(s) \notin \Hf^{m \times m}$. Here the \gls{srg} of $\Ttau$ includes $\{\infty\}$, while it remains bounded for $\Tinf$. The last example (Figure~\ref{fig:integrator}) shows that the \gls{srg} of the integrator is the extended imaginary axis for $\Tinf$ and the extended closed right-half-plane for $\Ttau$. This illustrates that the \glspl{srg} of both operators include $\{\infty\}$ if $T(s)$ has imaginary axis poles. In all examples we see that in the construction of the \gls{srg} of $\Ttau$, right-half-plane zeros have the effect of filling in the interior of the \gls{srg} of $\Tinf$.
\begin{figure}
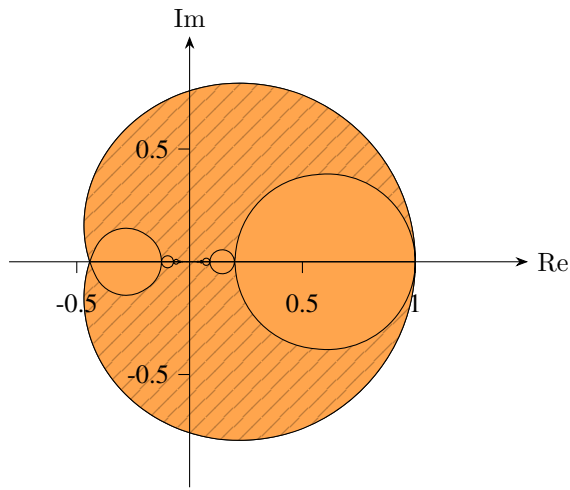

    \centering
    \begin{tikzpicture}[every to/.style={hyperbolic plane}, scale=3.2,>={Stealth[scale=1]}]     \input{data/SRG/delay_simple_T_tau.tex}
    \input{data/SRG/delay_Tinf.tex}
    \input{data/SRG/delay_Tinf_pattern.tex}

      \foreach \x in { -0.5, 0.5, 1}{
        \node[below] at (\x,-0.1) {\x};
        \draw (\x,0) -- (\x,-0.05);
      }

    \foreach \y in {-0.5, 0.5}{
        \node[left] at (-0.05,\y) {\y};
        \draw (0,\y) -- (-0.05,\y);
    }


      \draw[->] (-0.8,0) -- (1.5,0) node[right] {$\Re$};
      \draw[->] (0,-1) -- (0,1) node[above] {$\Im$};
    \end{tikzpicture}
    \caption{The figure shows the \gls{srg} of $T_1(s) = \frac{e^{-s}}{s+1}$. The hatched grey area shows the \gls{srg} of $\Tinf$ and the orange area shows the \gls{srg} of $\Ttau$ when $\tau \rightarrow \infty$.}
    \label{fig:delay}
\end{figure}
\begin{figure}
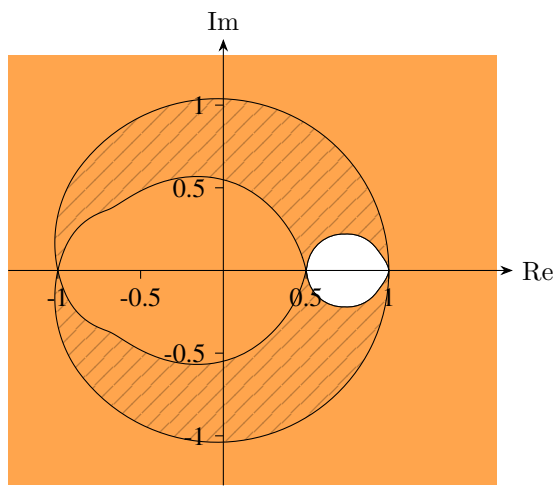

    \centering
    \begin{tikzpicture}[scale=2.2,>={Stealth[scale=1]}]
    \fill[DutchOrange!70] (-1.3,-1.3) rectangle (1.65,1.3);

      \draw[fill=white] (0.4997, 0.0000) to[hyperbolic plane] (0.5008, 0.0026) to (0.5016, 0.0072) to (0.5024, 0.0111) to (0.5032, 0.0149) to (0.5039, 0.0188) to (0.5046, 0.0223) to (0.5053, 0.0256) to (0.5059, 0.0291) to[hyperbolic plane] (0.5065, 0.0324) to[hyperbolic plane] (0.5070, 0.0356) to[hyperbolic plane] (0.7337, 0.2206) to[hyperbolic plane] (0.9300, 0.1274) to[hyperbolic plane] (0.9403, 0.1137) to[hyperbolic plane] (0.9506, 0.0996) to (0.9612, 0.0846) to (0.9722, 0.0684) to (0.9837, 0.0494) to (0.9958, 0.0224) to (1.0000, 0.0013);
\begin{scope}[yscale = -1]
    \draw[fill=white] (0.4997, 0.0000) to[hyperbolic plane] (0.5008, 0.0026) to (0.5016, 0.0072) to (0.5024, 0.0111) to (0.5032, 0.0149) to (0.5039, 0.0188) to (0.5046, 0.0223) to (0.5053, 0.0256) to (0.5059, 0.0291) to (0.5065, 0.0324) to[hyperbolic plane] (0.5070, 0.0356) to[hyperbolic plane] (0.7337, 0.2206) to[hyperbolic plane] (0.9300, 0.1274) to[hyperbolic plane] (0.9403, 0.1137) to[hyperbolic plane] (0.9506, 0.0996) to (0.9612, 0.0846) to (0.9722, 0.0684) to (0.9837, 0.0494) to (0.9958, 0.0224) to (1.0000, 0.0013);
\end{scope}

      \begin{scope}[every to/.style={hyperbolic plane}]
        \input{data/SRG/MIMO_unstable_ex.tex}
        \input{data/SRG/MIMO_unstable_ex_pattern.tex}
      \end{scope}

      \foreach \x in { -1, -0.5, 0.5, 1}{
        \node[below] at (\x,-0.05) {\x};
        \draw (\x,0) -- (\x,-0.05);
      }

    \foreach \y in {-1, -0.5, 0.5, 1}{
         \node[left] at (-0.05,\y) {\y};
         \draw (0,\y) -- (-0.05,\y);
     }

      \draw[->] (-1.3,0) -- (1.75,0) node[right] {$\Re$};
      \draw[->] (0,-1.3) -- (0,1.4) node[above] {$\Im$};

    \end{tikzpicture}
    \caption{The figure shows the \gls{srg} of a MIMO system $T_2(s) = [
         \frac{1}{s-1}, \frac{s}{s-1} ; \frac{s+1}{s+3}, \frac{1}{s+2} ]$. The \gls{srg} of $\Tinf$ is the hatched grey area, while the \gls{srg} of $\Ttau$ when $\tau \rightarrow \infty$ is represented by the orange area.}
    \label{fig:MIMO}
  \end{figure}
  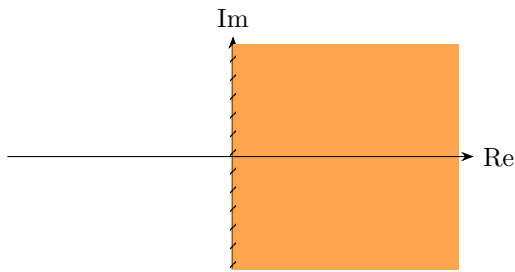
\begin{figure}
  \centering
    \begin{tikzpicture}[>={Stealth[scale=1]},scale=1]
        \draw[ultra thick, DutchOrange!70] (0,-1.5) -- (0,1.5);
        \draw[->] (0,-1.5) -- (0,1.6) node[above] {$\Im$};

    \fill[DutchOrange!70] (0,-1.5) -- (3,-1.5) -- (3,1.5) -- (0,1.5) -- cycle;
    \fill[
      pattern={Lines[angle=45, distance=5pt, line width=0.5pt]},
      pattern color=black
    ] (-0.03,-1.5) rectangle (0.03,1.5);

    \draw[->] (-3,0) -- (3.2,0) node[right] {$\Re$};
    
    \end{tikzpicture}
    \caption{The figure shows the \gls{srg} of $T_3(s) = 1/s$. For $\Tinf$ the \gls{srg} is the extended imaginary axis (hatched grey), while for $\Ttau$ when $\tau \rightarrow \infty$ it is the extended closed right-half-plane including the imaginary axis (orange).}
     \label{fig:integrator}
\end{figure}

\section{Conclusions}
\label{sec:conclusions}
In this paper, we have shown how to construct the \gls{srg} of a closed linear, possibly unbounded, operator using an algorithm based on maximum and minimum gain computations. To illustrate the utility of the result, we specified how to compute the \gls{srg} for two types of operators, defined over $\ltwoinf$, that model dynamical systems. Finally, we provided a set of \glspl{lmi} for computing the \gls{srg} for the special case when the models are given on state-space form.

\bibliographystyle{plain}
\bibliography{main}

\end{document}